\newif\ifreport\reporttrue
\newcommand{\age}{\Delta}
\newcommand{\ignore}[1]{}
\newtheorem{lemma}{Lemma}
\newtheorem{theorem}{Theorem}
\theoremstyle{definition}
\newtheorem{definition}{Definition}
\begin{document}

\title{Timely Updates: An Information Theoretical Approach} 
\title{Timely Information in Samples}
\title{Timely Information through Queues}
\title{A perspective on Information Aging through Queues}
\title{Information Aging through Queues: A Mutual Information Perspective}
%\title{Dynamic Sampling of Brownian Motion for }

%\title{Optimal Sampling of Brownian Motion in a Random Channel}

\IEEEoverridecommandlockouts
%\title{Update or Wait: How to Keep Your Data Fresh}
\author{
Yin Sun and Benjamin Cyr\\ %, Yury Polyanskiy, and Elif Uysal-Biyikoglu\\
Dept. of ECE,  Auburn University, Auburn, AL\\
%%Dept. of EECS, Massachusetts Institute of Technology, Cambridge, MA\\
%%Dept. of EEE, Middle East Technical University, Ankara, Turkey\\
%
%%\thanks{This work was supported in part by DTRA grant HDTRA1-14-1-0058, NSF grants CNS-1446582, CNS-1409336, CNS-1518829,  CNS-1514260, CNS-1422988, and CNS-1054738, ARO grant W911NF-14-1-0368, and ONR grant N00014-15-1-2166. E. Uysal-Biyikoglu was supported in part by TUBITAK and in part by a Science Academy BAGEP award.
\thanks{This work was supported in part by NSF grant CCF-1813050 and ONR grant N00014-17-1-2417.}
}
\maketitle
%\thispagestyle{plain}
%\pagestyle{plain}

% !TEX root = ./sampling_BM.tex
\begin{abstract}
In this paper, we propose a new measure for the  freshness of information, which uses the mutual information between the real-time source value and the delivered samples at the receiver to quantify the freshness of the information contained in the delivered samples. Hence, the ``aging" of the received  information can be  interpreted as a procedure that the above mutual information reduces as the age grows.
In addition, we  consider a sampling problem, where samples of a Markov source are taken and sent through a queue to  the receiver. In order to optimize the freshness of information, we study the optimal sampling policy that maximizes the time-average expected mutual information.  We prove that the optimal sampling policy is a threshold policy and find the optimal threshold exactly. Specifically, a new sample is taken once a conditional mutual information term reduces to a threshold, and the threshold is  equal to the optimum value of the time-average expected mutual information that is being maximized. Numerical results are provided to compare different sampling policies.
\end{abstract}

\ignore{
In this paper, we study a sampling problem, where samples of a Markov source are taken and sent through a queue to a destination. We use the mutual information between the real-time source value and the delivered samples to characterize the freshness of the information at the destination. Hence, ?aging of the delivered information" can be considered as a process that the above mutual information reduces as the age grows. We obtain the optimal-sampling policy that maximizes the time-average mutual information subject to a sampling rate constraint. The optimal sampling policy is proven to be a threshold policy, where the optimal threshold is obtained exactly. Numerical results are provided to compare different sampling policies.}

\section{Introduction}

Information usually has the greatest value when it is fresh \cite{Shapiro1999}. For example, real-time knowledge about the location, orientation, and speed of motor vehicles is imperative in autonomous driving, and the access to timely updates about the stock price and interest-rate movements is essential for developing trading strategies on the stock market. In \cite{Song1990,KaulYatesGruteser-Infocom2012}, the concept of \emph{Age of Information} was introduced  to measure the freshness of  information that a receiver has about the status of a remote source. Consider a sequence of source samples that are sent through a queue to a receiver, as illustrated  in Fig. \ref{fig_model}. Each sample is stamped with its generation time. 
Let $U_n$ be the time stamp of the newest sample that has been delivered to the receiver by time instant $n$. {The age of information, as a function of  $n$, is defined as} 
$\Delta_n = n - U_n$, which is the time elapsed since the newest sample was generated. 
Hence, a small  age $\Delta_n$ indicates that there exists a fresh sample of the source status  at the receiver. 

%By maintaining a small age $\Delta_n$ over time, one can ensure that a fresh sample of the status source is available at the receiver. 

In practice, the status of different sources may vary over time with different speeds. For example, the location of a car can change much faster than the temperature of its engine. While the age of information $\Delta_n$ represents the time difference between the samples available at the transmitter and receiver, it is independent of the changing speed of the source. Hence,  the age $\Delta_n$ is not  an appropriate measure for comparing the freshness of information about different sources. 

\begin{figure}%[!t]
\centering
\includegraphics[width=0.42\textwidth]{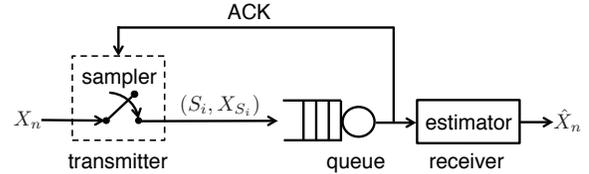}   
\caption{System model.}
\label{fig_model}
%\vspace{-6mm}
\end{figure}    
In recent years, several examples and approaches for evaluating the freshness of information about time-correlated sources have been discussed in, e.g., \cite{Yates2012,SunISIT2017,GAO201857,AgeOfInfo2016,Kosta2017,Bedewy2016,Bedewy2017,BedewyJournal2017,BedewyJournal2017_2,multiflow18}. In \cite{Yates2012,SunISIT2017,GAO201857} and the references therein, 
the received samples  are used to estimate the source value in  real-time, where the estimation error is used to measure  the freshness of information available at the receiver. 
 In \cite{AgeOfInfo2016}, an age penalty function $p(\Delta)$ was employed to describe the level of dissatisfaction for having aged samples at the receiver, where $p$ is an \emph{arbitrary} \emph{non-negative} and \emph{non-decreasing function} of the age $\Delta$ that can be specified based on the application; in addition, an optimal sampling strategy was developed to minimize the time-average expected age penalty function. In \cite{Kosta2017}, the authors considered the relationship between the auto-correlation function $r(\Delta_n) = \mathbb{E}[X_n^* X_{n-\Delta_n}]$ (where $X_n$ denotes the source status at time instant $n$) and the age penalty function in \cite{AgeOfInfo2016}, and provided analytical expressions for the long-run time average of a few auto-correlation functions. In \cite{Bedewy2016,Bedewy2017,BedewyJournal2017,BedewyJournal2017_2,multiflow18}, several  scheduling policies were developed to minimize an \emph{arbitrary non-decreasing functional} $f(\{\Delta_n: n\geq 0\})$ of the age process $\{\Delta_n: n\geq0\}$ in several network settings.  The age penalty models in  \cite{Bedewy2016,Bedewy2017,BedewyJournal2017,BedewyJournal2017_2,multiflow18} are quite general, which include most  age penalty models considered in previous studies as special cases. For example, because the functional $f(\{\Delta_n: n\geq 0\})$ is a mapping from the space of age processes to  real numbers, it can  be selected to describe the time-average age (i.e., $1/N\sum_{n=0}^N \Delta_n$), or the time-average of an age penalty function that depends on the age levels at multiple time instants (i.e., $1/N\sum_{n=0}^N p(\Delta_n,\Delta_{n-1},\ldots,\Delta_{n-k})$).  

In this paper, we propose a new measure for the freshness of information, which can precisely describe how information ages over time. 
For Markov sources, an online sampling policy is developed to optimize the freshness of information.\footnote{Non-Markov sources will be considered in our future work.} The detailed contributions of this paper are summarized as follows:

%\emph{the mutual information between the real-time source value and the received samples} to measure {the freshness of  information} that is available at the receiver. Compared to the information freshness metrics considered in \cite{Yates2012,SunISIT2017,AgeOfInfo2016,Kosta2017,Bedewy2016,Bedewy2017,BedewyJournal2017,BedewyJournal2017_2,multiflow18},  this mutual information term is easy to compute for a wide classes of information sources and . 

%has several benefits: First, the above mutual information term can   about a time-correlated source. 

%propose to evaluate the freshness of information by using a mutual information term. This method allows us to explore the time-correlation structure 

\begin{itemize}
%\item We %This mutual information term can be used to compare the freshness of information about different sources.

\item We propose to use  the mutual information between the real-time source value and the received  samples to quantify the freshness of the information contained in the received samples. This mutual information term is easy to compute for Markov sources: By using the data processing inequality, this mutual information is shown to be a \emph{non-negative} and \emph{non-increasing} function of the age $\Delta_n$ (Lemma \ref{lem1}).
Therefore, the ``aging'' of the received information can be interpreted as a procedure that this mutual information reduces as the age  $\Delta_n$ grows. 

\item In order to optimize the freshness of information, we study the optimal sampling strategy that  maximizes the time-average expected mutual information. This problem is solved in two steps: \emph{(i)} We first generalize \cite{AgeOfInfo2016} to obtain an optimal sampling strategy that  minimizes the time-average expected age penalty function $\limsup_{N\rightarrow \infty}\frac{1}{N}~\mathbb{E}[\sum_{n=1}^N p(\age_n)]$, where $p(\age)$ is  an \emph{arbitrary non-decreasing}  function of the age $\age$ (Theorem \ref{thm1}). 
%We note that  the  results in \cite{AgeOfInfo2016} only hold for  \emph{positive} and \emph{non-decreasing} age penalty functions.
\emph{(ii)} Next, we apply the result of Step \emph{(i)} to a special age penalty function, i.e., the negative of the mutual information, which is a \emph{non-positive} and \emph{non-decreasing} function of the age. 

\item The obtained optimal sampling strategy has a nice structure: A new sample is taken once a conditional mutual information reduces to a threshold $\beta$, and the threshold $\beta$ is  equal to the optimum value of the time-average expected mutual information that we  are  maximizing (Theorem \ref{thm2}). Numerical results are provided to compare different sampling policies. 

%\item In particular, the conventional uniform sampling policy is far from the optimum because of the queueing delay. 

\end{itemize}

\subsection{Relationship with Previous Work}
The closest study to this paper is \cite{AgeOfInfo2016}. 
The differences between \cite{AgeOfInfo2016} and this paper are explained in   the following:
\begin{itemize}
\item The age penalty function $p(\cdot)$  in \cite{AgeOfInfo2016} is  \emph{non-negative} and \emph{non-decreasing}. It cannot be directly applied to our problem, because the negative of the mutual information is a \emph{non-positive} and \emph{non-decreasing} function of the age. We relaxed $p(\cdot)$ to be an \emph{arbitrary non-decreasing} function in this paper.

\item In \cite{AgeOfInfo2016}, a two-layered nested bisection search algorithm was developed to compute the threshold $\beta$. In this paper, $\beta$ is  characterized  as the solution of a fixed-point equation, which can be solved by a single layer of bisection search. Hence, the computation of $\beta$ is  simplified.

\item In \cite{AgeOfInfo2016}, the optimal sampling strategy was obtained for a continuous-time system. In this paper, we develop an optimal sampling strategy for a discrete-time system, without taking any approximation or sub-optimality. 

\ignore{
\item We note that using Theorem \ref{thm1} and Theorem \ref{thm2}, one can easily obtain the optimal sampling policy of the corresponding continuous-time sampling problems. In particular, let $X_n = U_{n2^{-k}}$ be  samples of a continuous-time Markov chain $U_t$ that are taken periodic with a period $2^{-k}$. By taking the limit $k\rightarrow \infty$ and using the integral convergence theorems in \cite{Durrettbook10}, one can obtain the optimal sampling policy for the continuous-time cases. In the derived continuous-time sampling solutions, the $\min$ operator in \eqref{thm1_eq1} and \eqref{coro2_eq1} will be replaced by $\inf$, and the summations in \eqref{thm1_eq4} %, \eqref{thm1_eq3}, 
and \eqref{coro2_eq2} will become integrals.}

\item It was assume in \cite{AgeOfInfo2016} that after the previous sample was delivered, the next sample must be generated within a fixed amount of time. By adopting more powerful proof techniques, we are able to remove such an assumption and greatly simplify the proof procedure in this paper.

%An upper bound $M>0$ was imposed in \cite{AgeOfInfo2016} such that  $S_{i+1}-D_i\leq M$ holds for all sample $i$. We are able to remove such an assumption in this paper. %{\red undefined}

\end{itemize}

%sThe differences between this paper and \cite{AgeOfInfo2016}  are summarized at the end of Section \ref{sec_main_result}. 

%A comparison with \cite{AgeOfInfo2016} is presented in Section \ref{sec_dis}. 
\section{System Model}

We consider a discrete-time status-update system that is illustrated in Fig. \ref{fig_model}, where samples of a source $X_n$ are taken and sent to a receiver through a communication channel. The channel is modeled as a single-server FIFO queue with  \emph{i.i.d.} service times. The system starts to operate at time instant $n=0$. The $i$-th sample is generated at time instant $S_i$ and is delivered to the receiver at time instant $D_i$ with a discrete service time $Y_i$, where  $S_1\leq S_2\leq\ldots$, $S_i +Y_i \leq D_i$, and $\mathbb{E}[Y_i]<\infty$  for all $i$. Each sample packet contains both the sampling time $S_i$ and the sample value $X_{S_i}$. % are both sent to the receiver.
%Each sample $(S_i, X_{S_i})$ is time-stamped with its generation time $S_i$. 
The samples that the receiver has received by time instant $n$ are denoted by the set
\begin{align}\label{eq_samples}
\bm{W}_n = \{X_{S_i}: D_i \leq n\}. 
\end{align}
At any time instant $n$, the receiver uses the received samples  $\bm{W}_n$
to reconstruct an estimate $\hat X_n$ of the real-time source value $X_n$, where we assume that the estimator neglects the knowledge implied by the timing $S_i$ for taking the samples.

Let $U_n= \max\{S_{i}: D_{i} \leq n\}$ be the time stamp  of the freshest sample that the receiver has received by time instant $n$. Then, the \emph{age of information}, or simply the \emph{age}, at time instant $n$ is defined as  \cite{Song1990,KaulYatesGruteser-Infocom2012}
\begin{align}\label{eq_age}
\Delta_{n} = n-U_n = n - \max\{S_{i}: D_{i} \leq n\}.
\end{align}
The initial state of the system is assumed to satisfy $S_1 = 0$, $D_1 = Y_1$, and $\Delta_{0}$ is a finite constant. 

Let $\pi = (S_1,S_2,\ldots)$ represent a sampling policy %{\red Define the policy space}
 and $\Pi$ denote the set of  \emph{causal} sampling policies that satisfy the following two conditions: \emph{(i)} Each sampling time $S_i$ is chosen based on {history and current information of the system}, but not on any {future information}. \emph{(ii)} The inter-sampling times $\{T_i = S_{i+1}-S_i, i=1,2,\ldots\}$ form a {regenerative process} \cite[Section 6.1]{Haas2002}\footnote{We assume that $T_i$ is a regenerative process because we will optimize $\liminf_{N\rightarrow \infty}\mathbb{E}[\sum_{n=1}^NI(X_n; \bm{W}_n)]/N$, but operationally a nicer objective function is $\liminf_{i\rightarrow \infty}{\mathbb{E}[\sum_{n=0}^{D_i} I(X_n; \bm{W}_n)]}/{\mathbb{E}[D_i]}$. These two objective functions are equivalent if $\{T_1,T_2,\ldots\}$ is a regenerative process.}: There exists an increasing sequence  $0\leq {k_1}<k_2< \ldots$ of almost surely finite random integers such that the post-${k_j}$ process $\{T_{k_j+i}, i=1,2,\ldots\}$ has the same distribution as the post-${k_1}$ process $\{T_{k_1+i}, i=1,2,\ldots\}$ and is independent of the pre-$k_j$ process $\{T_{i}, i=1,2,\ldots, k_j-1\}$; in addition, $0<\mathbb{E}[S_{k_{j+1}}-S_{k_j}]<\infty, ~j=1,2,\ldots$

%Similar to \cite{SOLEYMANI20161}, we assume that the estimator neglects the implied knowledge when no measurement is transmitted.

We assume that the Markov chain $X_n$ and the service times $Y_i$ are determined by two {mutually independent} external processes, which do not change according to the adopted sampling policy.

%does not know how the sampling times $S_i$ are chosen in the sampling policy. 

%there exists $M>0$ such that
%\begin{equation}\label{eq_finite}
%
%\end{equation}

\section{Mutual Information as a Measure of the Freshness of Information}\label{sec:metric}

In this paper, we propose to use the mutual information 
\begin{align}
I(X_n; \bm{W}_n)  = H(X_n) - H(X_n| \bm{W}_n)
\end{align}
as a metric for evaluating the freshness of information that is available at the receiver. In information theory, $I(X_n; \bm{W}_n)$ is the amount of information that the received samples $\bm{W}_n$ carries about the real-time source value $X_n$. If $I(X_n; \bm{W}_n)$ is  close to $H(X_n)$, the received samples $\bm{W}_n$ are considered to be fresh; if $I(X_n; \bm{W}_n)$ is almost  $0$,  the received samples $\bm{W}_n$ are considered to be obsolete. In addition, because $I(X_n; \bm{W}_n)$ has naturally incorporated the information structure of the source $X_n$, it can effectively characterize the freshness of information about  sources with different time-varying patterns.

One way to interpret  $I(X_n; \bm{W}_n)$ is to consider how helpful the received samples $\bm{W}_n$ are for inferring $X_n$. 
By using the Shannon code lengths \cite[Section 5.4]{Cover}, the expected minimum number of bits $L$ required to specify $X_n$ satisfies 
\begin{align}\label{eq_length}
H(X_n) \leq L < H(X_n) + 1,
\end{align}
where $L$ can be interpreted as the expected minimum number of binary tests that are needed to infer $X_n$. 
On the other hand, with the knowledge of $\bm{W}_n$, the expected minimum number of bits $L'$ required to specify $X_n$ satisfies
\begin{align}\label{eq_length1}
H(X_n| \bm{W}_n) \leq L' < H(X_n| \bm{W}_n) + 1.
\end{align}
If $X_n$ is a random vector consisting of a large number of symbols (e.g., $X_n$ represents an image containing many pixels or the channel coefficients of many OFDM subcarriers), the one bit of overhead in \eqref{eq_length} and \eqref{eq_length1} is insignificant. 
Hence, $I(X_n; \bm{W}_n)$ is approximately the reduction in the description cost for inferring $X_n$ without and with the knowledge of $\bm{W}_n$.

%The mutual information between the source value $X_n$ and its estimate $\hat X_n$ is $I(X_n; \hat X_n)$. 
%%%Since the estimator neglects the timing information contained in $S_i$, 
%By data processing inequality \cite{Cover}, $I(X_n; \hat X_n)$ is no greater than $I(X_n; \bm{W}_n)$. For certain estimators  \cite{SunISIT2017}, it holds that $I(X_n; \hat X_n) = I(X_n; \bm{W}_n)$. 

%, which has the following  property:

\subsection{Markov Sources}
To get more insights, let us consider the class of Markov sources and use the Markov property to simplify $I(X_n; \bm{W}_n)$. 
By using the data processing inequality \cite{Cover}, it is not hard to show that  $I(X_n; \bm{W}_n)$ has the following  property:

\begin{lemma}\label{lem1}
If $X_n$ is a time-homogeneous Markov chain and $\bm{W}_n$ is defined in \eqref{eq_samples}, then the mutual information
\begin{align}\label{eq_lem1}
I(X_n; \bm{W}_n) = I(X_n; X_{n-\age_n})
\end{align}
can be expressed as a non-negative and non-increasing function $r(\age_n)$ of the age $\age_n$.
\end{lemma}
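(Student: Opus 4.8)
The plan is to first collapse the full sample set $\bm{W}_n$ to its freshest element, and then establish the sign and monotonicity of the resulting function. By definition $U_n = n - \age_n = \max\{S_i : D_i \le n\}$, so the sample $X_{U_n}$ lies in $\bm{W}_n$ while every other received sample has a timestamp strictly less than $U_n$. Writing $\bm{V}_n = \{X_{S_i} : D_i \le n,\ S_i < U_n\}$ for the older samples, the key observation is that, since $\{X_t\}$ is Markov and $U_n \le n$, any time-ordered subsequence of the chain is again Markov, so $\bm{V}_n \to X_{U_n} \to X_n$ forms a Markov chain. Hence $X_n$ is conditionally independent of $\bm{V}_n$ given $X_{U_n}$, and by the chain rule of mutual information,
\begin{align}
I(X_n; \bm{W}_n) &= I(X_n; X_{U_n}, \bm{V}_n) \nonumber\\
&= I(X_n; X_{U_n}) + I(X_n; \bm{V}_n \mid X_{U_n}) \nonumber\\
&= I(X_n; X_{U_n}),
\end{align}
because the conditional term vanishes. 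Since $U_n = n - \age_n$, this is exactly \eqref{eq_lem1}.

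Next I would argue that $I(X_n; X_{n-\age_n})$ depends on $n$ and $U_n$ only through the age $\age_n$. The joint law of $(X_{n-\age_n}, X_n)$ factors into the source marginal at time $n-\age_n$ and the $\age_n$-step transition kernel; by time-homogeneity (together with time-invariance of the source marginal) both factors are determined by $\age_n$ alone. I would therefore define $r(\delta) := I(X_m; X_{m-\delta})$ for any $m$, which is well defined, and write $I(X_n; \bm{W}_n) = r(\age_n)$.

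Non-negativity is then immediate from the non-negativity of mutual information, so $r(\delta) \ge 0$. For the non-increasing property, I would invoke the data processing inequality on the Markov chain $X_{n-\delta'} \to X_{n-\delta} \to X_n$, which is valid whenever $\delta' \ge \delta$ because the three indices are time-ordered. Taking the endpoints $X_{n-\delta'}$ and $X_n$ and the intermediate $X_{n-\delta}$ gives $I(X_n; X_{n-\delta'}) \le I(X_n; X_{n-\delta})$, i.e. $r(\delta') \le r(\delta)$ for $\delta' \ge \delta$, so $r$ is non-increasing.

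The main obstacle is the first step: verifying that the freshest sample $X_{U_n}$ is a sufficient statistic for $X_n$ among all received samples. One must confirm that $\bm{W}_n$ contains no sample newer than $U_n$ and then correctly apply the Markov property to annihilate the conditional mutual information $I(X_n; \bm{V}_n \mid X_{U_n})$. The secondary delicate point is the claim that $r$ depends on the age alone, which implicitly requires the source marginal to be time-invariant; absent stationarity, $I(X_n; X_{n-\age_n})$ would in general depend on $n-\age_n$ as well as on $\age_n$, and this assumption should be stated explicitly.
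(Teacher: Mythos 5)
Your proof is correct and follows essentially the same route as the paper's: collapse $\bm{W}_n$ to the freshest sample $X_{n-\age_n}$ via the Markov (sufficient-statistic) property, use time-homogeneity to express the mutual information as a function of the age alone, and conclude non-negativity and monotonicity from the non-negativity of mutual information and the data processing inequality, respectively. Your closing caveat is also well taken: the identity $I(X_n;X_{n-\age_n})=r(\age_n)$ does implicitly require a time-invariant marginal (stationarity), an assumption the paper's proof uses silently when it writes $I(X_n;X_{n-\age})=I(X_{\age+1};X_1)$ and in both of its examples.
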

\begin{proof}
%\ifreport
Because $X_n$ is a Markov chain, $X_{\max\{S_{i} :  D_{i} \leq n\}}=X_{n-\age_n}$ contains all the information in $\bm{W}_n = \{X_{S_i}: D_i \leq n\}$ about $X_n$. In other words, $X_{n-\age_n}$ is a sufficient statistic of $\bm{W}_n$ for estimating $X_n$. 
Then,  \eqref{eq_lem1} follows from  \cite[Eq. (2.124)]{Cover}.

Next, because $X_n$ is time-homogeneous, $I(X_n; X_{n-\age}) = I(X_{\age+1}; X_{1})$ for all $n$, which is a function of the $\age$. Further, because $X_n$ is a Markov chain, owing to the data processing inequality \cite[Theorem 2.8.1]{Cover}, $I(X_{\age+1}; X_{1})$ is non-increasing in $\age$. Finally,  mutual information  is non-negative. This completes the proof.
\end{proof}

According to Lemma \ref{lem1},  information ``aging''  can be considered as a procedure that \emph{the amount of information $I(X_n; \bm{W}_n)$ that is preserved in $\bm{W}_n$   for {inferring} the real-time source value $X_n$ decreases as the age $\age_n$ grows}. This is similar to the data processing inequality  \cite{Cover} which states that no processing of the data $Y$ can increase the information that $Y$ contains about $Z$; the difference is that in the status-update systems that we consider,  the sample set $\bm{W}_n$, the age $\age_n$, and the signal value $X_n$ are all evolving over time. 
%{\red Explain that timing information cannot increase the freshness mutual information.} 

Two examples of the Markov source $X_n$ are provided in the sequel as illustrations of Lemma \ref{lem1}:
\subsubsection{Gaussian Markov Source} Suppose that $X_n$ is a first-order discrete-time Gaussian Markov process, defined by 
\begin{align}
X_n = a X_{n-1} + Z_n,
\end{align} 
where $a\in(-1,1)$ and the $Z_n$'s are zero-mean \emph{i.i.d.}~Gaussian random variables with variance $\sigma^2$. Because $X_n$ is a Gaussian Markov process, one can show that \cite{Gelfand1959}
\begin{align}
I(X_n; \bm{W}_n) = I\left(X_n; X_{n-\age_n}\right) = -\frac{1}{2} \log_2 \left(1-a^{2\age_n}\right).
\end{align} 
Since $a\in(-1,1)$ and $\age_n\geq 0$ is an integer, $I(X_n; \bm{W}_n)$ is a positive and decreasing function of the age $\age_n$. Note that if $\age_n=0$, then $I(X_n; \bm{W}_n)=H(X_n)=\infty$, because the absolute entropy of a Gaussian random variable is infinite. 
\subsubsection{Binary Markov Source} Suppose that $X_n\in\{0,1\}$ is a binary symmetric Markov chain defined by
\begin{align}\label{eq_binary}
X_n = X_{n-1} \oplus V_n,
\end{align} 
where $\oplus$ denotes binary modulo-2 addition and the $V_n$'s are \emph{i.i.d.} Bernoulli random variables with mean $q\in[0,\frac{1}{2}]$. One can show that
\begin{align}
I(X_n; \bm{W}_n) &= I\left(X_n; X_{n-\age_n}\right) = 1\!-\!h\!\left(\frac{1-(1-2q)^{\age_n} }{2}\right)\!,
\end{align} 
where $\Pr[X_n =1 | X_0=0] = \frac{1-(1-2q)^{n} }{2}$ and $h(x)$ is the binary entropy function defined by $h(x)=-x\log_2 x - (1-x) \log_2 (1-x)$
with a domain $x\in[0,1]$ \cite[Eq. (2.5)]{Cover}. Because $h(x)$ is increasing on $[0,\frac{1}{2}]$, $I(X_n; \bm{W}_n)$ is a non-negative and decreasing function of the age $\age_n$.

\section{Online Sampling for Information Freshness}
In this section, we will develop an optimal online sampling policy that can maximize the freshness of information about Markov sources. 
\subsection{Problem Formulation}

To optimize the freshness of information, we formulate an online sampling problem for maximizing the time-average expected mutual information between $X_n$ and $\bm{W}_n$ over an infinite time-horizon:
\vspace{-0mm}
\begin{equation}\label{eq_problem1}
\bar I_{\text{opt}}=\sup_{\pi\in\Pi}~\liminf_{N\rightarrow \infty}\frac{1}{N}~\mathbb{E}\left[\sum_{n=1}^NI(X_n; \bm{W}_n)\right],
\vspace{-0mm}
\end{equation}
where $\bar I_{\text{opt}}$ is the optimal value of \eqref{eq_problem1}. We assume that $\bar I_{\text{opt}}$ is finite.

It is helpful to remark that $\bar I_{\text{opt}}$ in \eqref{eq_problem1} is different from the Shannon capacity considered in, e.g., \cite{Anantharam1996,Cover}:
In \eqref{eq_problem1}, 
%the received data $\bm{W}_n$ is used to , and 
our goal is to maximize \emph{the freshness of information} and make more accurate inference about the real-time source value; this goal is achieved by minimizing the average amount of mutual information that is lost as the received data becomes obsolete. 
On the other hand, the focus of Shannon capacity theory is mainly on maximizing \emph{the rate of information} that can be reliably transmitted to the receiver, but (in most cases) without significant concerns about whether the received information is new or old. %We will further explore the relationship between the freshness and rate of information in our future work. 
%{\red Other related studies} 

%One of our future research tasks is to investigate .

% channel coding is employed to maximize the rate of information that can be reliably transmitted to the receiver, where the objective function is in the form of
%\begin{equation}\label{eq_problem2}
%\liminf_{N\rightarrow \infty}\frac{1}{N}~\mathbb{E}\left[I(X_1,\ldots,X_N; \bm{W}_N)\right],
%\end{equation}
%that allows joint or Markov channel coding across the time instants. 
%\ignore{Notice that the directed information $I((X_1,\ldots,X_N) \rightarrow (\bm{W}_1,\ldots,\bm{W}_N))$ is equal to  the mutual information $I(X_1,\ldots,X_N; \bm{W}_1,\ldots,\bm{W}_N)$ in our setting \cite[Proposition 3]{Weissman2013}.}
%One interesting future research direction is to maximize \eqref{eq_problem2} within the set of causal encoding and decoding policies. {\red We care about the freshness of information, not the amount of information}.

%One way to maximize is to use all the 
%that is different from that on the right-hand-side of  \eqref{eq_problem1}. 
%The relationship between \eqref{eq_problem1} and will studied later. 

\ignore{
Notice that a \emph{causality} requirement is imposed in \eqref{eq:problem}, where the  samples $\bm{W}_n$ that are received by time instant $n$ are used to infer or estimate $X_n$. This is akin to the \emph{non-causality} assumption in Shannon capacity \cite{} where the packets are decoded 

One commonly used method to circumvent this problem is to introduce a delay in the reconstructed signal and truncate the remaining non-causal part. However, these methods are somewhat ad-hoc and the optimal delay-accuracy tradeoff is not fully understood

the received samples $\bm{W}_n$ are causally used in each time instant $n$.

We note that \eqref{eq:problem} is different from the of the queueing system, because the received 

 that the samples are causally used to estimate the source value in real-time. }
%Note that is different from in several aspects: (i)  

%\footnote{In probability theory, this can be characterized as ``$S_i$ is a stopping time w.r.t. the filtration generated by the available information.''} 

%In this section, we first use the mutual information $I(X_n; \bm{W}_n)$ to characterize the freshness of information for Markov sources, and then present a optimal solution to \eqref{eq_problem1} with quite low complexity.

\subsection{Optimal Online Sampling Policy}\label{sec_main_result}

%In this section, we will develop an optimal sampling policy that solves \eqref{eq_problem1}, by exploiting the property of $I(X_n; \bm{W}_n)$ in Lemma \ref{lem1}.
In  \cite{AgeOfInfo2016}, an age penalty function $p(\age)$ was defined to characterize the level of dissatisfaction for having aged information
at the receiver, where $p: \mathbb{R}\mapsto \mathbb{R}$ is an arbitrary \emph{non-negative} and \emph{non-decreasing} function that can be specified according to the  application. For continuous-time status-update systems, the optimal sampling policy for minimizing the time-average expected age penalty $\limsup_{T\rightarrow \infty}\frac{1}{T}~\mathbb{E}[\int_0^T p(\age(t)) dt]$  was obtained in \cite{AgeOfInfo2016}. Unfortunately, we are not able to apply the results in \cite{AgeOfInfo2016} to solve \eqref{eq_problem1}.
%We note that the age penalty model considered in \cite{AgeOfInfo2016} and the information freshness metric proposed in Section \ref{sec:metric} are not compatible: 
Specifically, if we choose an age penalty function $p_2(\age_n) = -I(X_n; \bm{W}_n) = - r(\age_n)$, then Lemma \ref{lem1} suggests that $p_2(\cdot)$ is a \emph{non-positive} and \emph{non-decreasing}, which is different from the \emph{non-negative} and \emph{non-decreasing} age penalty function required in \cite{AgeOfInfo2016}. In addition, we consider a discrete-time system in this paper, which is different from the continuous-time system in \cite{AgeOfInfo2016}. 

\begin{figure}%[!t]
\centering
\includegraphics[width=0.46\textwidth]{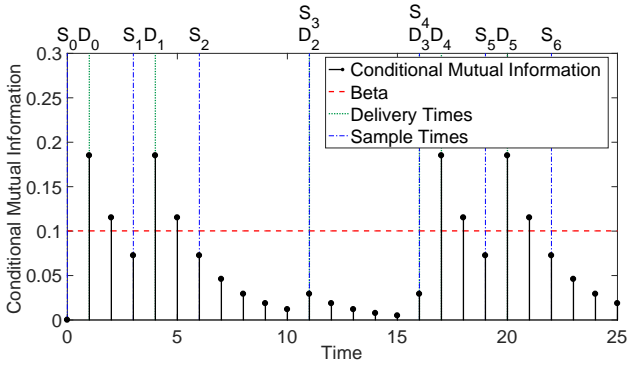}   
\caption{A sample-path illustration of the optimal sampling policy \eqref{coro2_eq1} and \eqref{coro2_eq2}, where the service time $Y_i$ is equal to either $1$ or $5$ with equal probability. 
On this sample-path, the service times are $Y_0=1, Y_1 = 1, Y_2 = 5, Y_3 = 5, Y_4 = 1, Y_5 = 1, Y_6 = 5$.  }
%\vspace{-3mm}
\label{fig_policy}
\end{figure}    
% this optimal sampling policy has a much lower complexity than that in \cite{AgeOfInfo2016}

To address this problem, we  generalize \cite{AgeOfInfo2016} by considering an \emph{arbitrary non-decreasing} age penalty function (no matter \emph{positive} or \emph{negative}) and design an optimal sampling policy that minimizes the time-average expected age penalty. To that end, we consider the following discrete-time age penalty minimization problem: 
\begin{align}
\bar p_{\text{opt}} = \inf_{\pi\in\Pi}~&\limsup_{N\rightarrow \infty}\frac{1}{N}~\mathbb{E}\left[\sum_{n=1}^N p(\age_n)\right]\label{eq_problem} %\\
%\text{s.t.}~~& \liminf_{N\rightarrow \infty} \frac{1}{N}{\mathbb{E}\left[\sum_{n=1}^N (S_{n+1}-S_n)\right]} \geq \frac{1}{f_{\max}},\label{eq_constraint}
\end{align}
where $p: \mathbb{R}\mapsto \mathbb{R}$ is an arbitrary \emph{non-decreasing} function
%, $f_{\max}$ is the maximum allowed sampling rate subject to a time-average resource constraint (i.e., on the power resource or CPU cycles) of the sampler, 
and $\bar p_{\text{opt}}$ denotes the optimal value of  \eqref{eq_problem}. We assume that $\bar p_{\text{opt}}$ is finite.
Problem \eqref{eq_problem} is a Markov decision problem. A closed-form solution of \eqref{eq_problem} is provided in the following theorem:
\ignore{
\begin{theorem}\label{thm1}
If $p: \mathbb{R}\mapsto \mathbb{R}$ is non-decreasing and the service times $Y_i$ are {i.i.d.}, then there exists a threshold $\beta\in\mathbb{R}$ such that the sampling policy 
\begin{align}\label{thm1_eq1}
S_{i+1} \!=\! \min\{n\in \mathbb{N}: n\geq  D_i, \mathbb{E}_{Y_{i+1}}\!\left[p(n\!+\!Y_{i+1}\!-S_i)\right]\! \geq\! \beta \}\!\!
\end{align}
is optimal to \eqref{eq_problem}, where $D_i = S_i + Y_i$, $\mathbb{E}_{Y}$ denotes the expectation with respect to the random variable  $Y$, and $\beta$ is uniquely determined by solving \eqref{thm1_eq1} and \eqref{thm1_eq2}:
\begin{align}\label{thm1_eq2}
\mathbb{E}[D_{i+1}-D_i] = \max \Bigg\{\frac{1}{f_{\max}}, \frac{\mathbb{E}\bigg[\sum\limits_{n=D_i}^{D_{i+1}-1} p(n\!-\!S_i)\bigg]}{\beta}\Bigg\},\!
\end{align}
The optimal value of \eqref{eq_problem} is then given by \emph{
\begin{align}\label{thm1_eq3}
\bar p_{\text{opt}} = \frac{\mathbb{E}\bigg[\sum\limits_{n=D_i}^{D_{i+1}-1} p(n-S_i)\bigg]}{\mathbb{E}[D_{i+1}-D_i]}.
\end{align}}
\end{theorem} }

\begin{theorem}\label{thm1}
If $p: \mathbb{R}\mapsto \mathbb{R}$ in \eqref{eq_problem} is non-decreasing and the service times $Y_i$ are {i.i.d.}, then there exists a threshold $\beta\in\mathbb{R}$ such that the sampling policy 
\begin{align}\label{thm1_eq1}
\!\!\!\!S_{i+1} =\min\{n\in \mathbb{N}\!:\! n\geq  D_i, \mathbb{E}\!\left[p(n+Y_{i+1}-S_i)|S_i,Y_i\right]\! \geq\! \beta \}\!\!
\end{align}
is optimal to \eqref{eq_problem}, where $D_i = S_i + Y_i$ and $\beta$ is determined by solving \eqref{thm1_eq1} and \eqref{thm1_eq4}: 
\begin{align}\label{thm1_eq4}
\beta=\frac{\mathbb{E}\bigg[\sum\limits_{n=D_i}^{D_{i+1}-1} p(n-S_i)\bigg]}{\mathbb{E}[D_{i+1}-D_i]},
\end{align}
%if $\mathbb{E}[S_{i+1}-S_i]   \geq \frac{1}{f_{\max}}$ is satisfied; otherwise, $\beta$ is determined by solving \eqref{thm1_eq1} and \eqref{thm1_eq5}: 
%\begin{align}\label{thm1_eq5}
%\mathbb{E}[S_{i+1}-S_i] = \frac{1}{f_{\max}}.
%\end{align}
Further, $\beta$ is exactly the optimal value of \eqref{eq_problem}, i.e., \emph{$\beta=\bar p_{\text{opt}}$}.
%In addition, the optimal value of \eqref{eq_problem} the  threshold $\beta$ in Theorem \ref{thm1} is exactly the optimal value of  \eqref{eq_problem}, i.e., \emph{$\beta = \bar p_{\text{opt}}$
%In addition, the optimal value of \eqref{eq_problem} is  given by \emph{
%\begin{align}\label{thm1_eq3}
%\bar p_{\text{opt}} = \frac{\mathbb{E}\bigg[\sum\limits_{n=D_i}^{D_{i+1}-1} p(n-S_i)\bigg]}{\mathbb{E}[D_{i+1}-D_i]}.
%\end{align}}
\end{theorem}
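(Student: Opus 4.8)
The plan is to cast Problem~\eqref{eq_problem} as a renewal-reward (semi-Markov decision) problem and attack it via a Dinkelbach-style parametric reformulation, exploiting the regenerative structure of the inter-sampling times. First I would use the regenerative assumption on $\{T_i\}$ to replace the time-average objective $\limsup_N \frac{1}{N}\mathbb{E}[\sum_{n=1}^N p(\age_n)]$ by the per-cycle ratio $\mathbb{E}[\sum_{n=D_i}^{D_{i+1}-1} p(n-S_i)]/\mathbb{E}[D_{i+1}-D_i]$; the footnote in the excerpt already indicates that these two quantities coincide under regeneration, so the work is to make that identification rigorous via the renewal-reward theorem. This reduces the infinite-horizon problem to optimizing, over causal choices of the single inter-sample decision, the ratio of expected accumulated penalty to expected cycle length.

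\medskip

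The core step is the parametric (Dinkelbach) argument. For a candidate value $c\in\mathbb{R}$ define
\begin{align}\label{eq_plan_param}
g(c)=\inf_{\pi\in\Pi}\ \mathbb{E}\!\left[\sum_{n=D_i}^{D_{i+1}-1}\!\big(p(n-S_i)-c\big)\right].
\end{align}
I would show $g$ is continuous and strictly decreasing in $c$, and that $\bar p_{\text{opt}}=c^\star$ where $c^\star$ is the unique root $g(c^\star)=0$; this is the standard equivalence between the fractional objective and the parametric family. The root condition $g(c^\star)=0$, once the inner minimizer is plugged in, is precisely the fixed-point equation~\eqref{thm1_eq4} with $\beta=c^\star$, which simultaneously yields $\beta=\bar p_{\text{opt}}$. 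The inner minimization in~\eqref{eq_plan_param} for fixed $c$ is where the threshold structure emerges: conditioned on $(S_i,Y_i)$, the sampler chooses the next sampling time $S_{i+1}=n\geq D_i$ to minimize the expected remaining reward-to-go, and because $p$ is \emph{non-decreasing} the one-step penalty of deferring sampling, namely $\mathbb{E}[p(n+Y_{i+1}-S_i)\,|\,S_i,Y_i]-c$, is monotone in $n$. Hence it is optimal to keep waiting exactly while this conditional expectation stays below $c$ and to sample as soon as it reaches $c$, which is the stopping rule~\eqref{thm1_eq1} with $\beta=c=c^\star$. I would verify optimality of this threshold rule either through the Bellman/optimal-stopping optimality equation for the per-cycle problem or by an exchange argument comparing the threshold policy against any competing causal rule on a cycle.

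\medskip

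The main obstacle I anticipate is twofold. First, one must justify that the inner problem over $\Pi$ genuinely decouples into an independent per-cycle optimal-stopping problem that depends only on the current state $(S_i,Y_i)$; the regenerative structure gives this morally, but making it precise requires care because the objective is a $\limsup$ time-average with the penalty $p$ merely non-decreasing (possibly negative and unbounded), so I must confirm the relevant expectations are finite—this is where the standing assumptions $\mathbb{E}[Y_i]<\infty$ and $\bar p_{\text{opt}}$ finite are used, together with a uniform-integrability or dominated-convergence argument to rule out pathological policies that stop with infinite expected delay. Second, establishing existence and uniqueness of the threshold $\beta$ solving~\eqref{thm1_eq1}–\eqref{thm1_eq4} requires showing the map defined by substituting the threshold stopping time back into the ratio is a contraction-like monotone map with a single fixed point; monotonicity of $g(c)$ delivers uniqueness, while existence follows from the intermediate value theorem once the boundary behavior of $g$ is pinned down. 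Removing the ``next sample within fixed time'' assumption of~\cite{AgeOfInfo2016}, as the excerpt emphasizes, is exactly what forces this more delicate integrability bookkeeping, so I would allocate most of the effort there rather than to the algebra of~\eqref{thm1_eq4}.
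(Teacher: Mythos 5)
Your overall route coincides with the paper's: reduce the time-average objective to a per-cycle ratio via the regenerative/renewal-reward structure, pass to the Dinkelbach-style parametric problem (the paper's \eqref{eq_SD}, which it handles by citing Lemma \ref{lem_ratio_to_minus}), and extract the threshold rule \eqref{thm1_eq1} from the monotonicity of $p$, with the root of the parametric problem yielding the fixed-point equation \eqref{thm1_eq4} and the identity $\beta=\bar p_{\text{opt}}$. Your per-cycle optimal-stopping argument is exactly the paper's Lemma \ref{lem4}, and your integrability/uniqueness concerns correspond to what the paper disposes of by citation and by its sufficient-statistic argument (Lemma \ref{lem_SRoptimal}).

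There is, however, one genuine gap. In your inner minimization you let the sampler ``choose the next sampling time $S_{i+1}=n\geq D_i$,'' but the constraint $n\geq D_i$ is not part of the definition of $\Pi$: a causal policy may generate a sample while the server is busy, in which case the sample waits in the queue and is delivered at $D_{i+1}=D_i+Y_{i+1}$ no matter when in $[S_i,D_i)$ it was taken. Your parametric function $g(c)$ is defined as an infimum over all of $\Pi$, yet the threshold policy you exhibit is only compared against policies that never queue; if queueing policies could achieve strictly smaller per-cycle cost, the root $c^\star$ computed over the non-queueing subclass would sit strictly above $\bar p_{\text{opt}}$, and the conclusion $\beta=\bar p_{\text{opt}}$ would fail. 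The paper makes this reduction its explicit first step, restricting to $\Pi_1=\{\pi\in\Pi: S_{i+1}\geq D_i=S_i+Y_i \text{ for all } i\}$: replacing any $S_{i+1}<D_i$ by $S_{i+1}'=D_i$ leaves every delivery time unchanged (by FIFO, $D_{i+1}=\max(S_{i+1},D_i)+Y_{i+1}$), weakly increases the time stamp $U_n$ of the freshest delivered sample, hence weakly decreases every age $\Delta_n$ and, since $p$ is non-decreasing, every penalty $p(\Delta_n)$, sample path by sample path. Adding this domination argument before your per-cycle analysis closes the gap; with it in place, the rest of your plan matches the paper's proof.
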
 

\ifreport
\begin{proof}
See Section \ref{sec_analysis}.
\end{proof}
\else
\begin{proof}[Proof Sketch]
%Due to space limitations, the proof of Theorem \ref{thm1} is relegated to
Theorem \ref{thm1} is proven in three steps: \emph{(i)} We show that it is better to generate samples when the server is idle. This property helps  to remove the harmful queueing delay and simplify the policy space containing the optimal policy. \emph{(ii)} We  use Lemma 2 of \cite{Sun_reportISIT17} to transform the problem into an equivalent problem that is easier to handle. \emph{(iii)} Finally, we show that the optimal sampling problem can be decomposed into a sequence of mutually independent per-sample control problems. The optimal per-sample control policy is given by \eqref{thm1_eq1}, with the threshold $\beta$ satisfying \eqref{thm1_eq4}.
Due to space limitations, the details are relegated to  our technical report \cite{SunSPAWC2018report}. 
\end{proof}
\fi

\ignore{
%Theorem \ref{thm1} tells us that the threshold $\beta$ is determined by \eqref{thm1_eq1} and \eqref{thm1_eq4} if the constraint \eqref{eq_constraint} is inactive; and $\beta$ is determined by \eqref{thm1_eq1} and \eqref{thm1_eq5} if the constraint \eqref{eq_constraint} is active. 
%Theorem \ref{thm1} is more general than Theorem 3 in \cite{AgeOfInfo2016} in several aspects, which will be discussed in Section \ref{sec_dis}. 
%Notice that $\mathbb{E}[S_{i+1}-S_i] = \mathbb{E}[D_{i+1}-D_i]$ is satisfied in Theorem \ref{thm1}.

%From Theorem \ref{thm1}, it is easy to obtain
%
%\begin{corollary}\label{coro1}
%If $f_{\max} =\infty$, then the  threshold $\beta$ in Theorem \ref{thm1} is exactly the optimal value of  \eqref{eq_problem}, i.e., \emph{$\beta = \bar p_{\text{opt}}$}.
%\end{corollary}
In a recent study \cite{TanISIT2018}, an optimal sampling policy was developed for minimizing the age of information for status updates sent from an energy-harvesting source; it was shown that this optimal sampling policy is a threshold policy with multiple thresholds, among which the smallest threshold is equal to the optimum objective value. 
The proof techniques used here are quite different from those in \cite{TanISIT2018}. 
%, the proof techniques used in these two papers are quite different from each other. 
}

Next, we consider a special case that $p(\age_n) = -I(X_n; \bm{W}_n) = - r(\age_n)$. It follows  from Theorem \ref{thm1} %and Corollary \ref{coro1}
 that

\begin{theorem}\label{thm2}
If the service times $Y_i$ are {i.i.d.}, then there exists a threshold $\beta\geq 0$ such that the sampling policy 
\begin{align}\label{coro2_eq1}
\!\!\!\!S_{i+1} &\!=\! \min\{n\!\in\! \mathbb{N}\!: \!n \geq D_i, \nonumber \\
&~~~~~~~~~~~~~\mathbb{E}_{Y_{i+1}}\!\left[I(X_{n+Y_{i+1}}; X_{S_i}| Y_{i+1} = y_{i+1})\right]\! \leq\! \beta \}\!\!\!\!\!\!\nonumber\\
&\!=\! \min\{n\!\in\! \mathbb{N}\!:\! n \geq D_i, I(X_{n+Y_{i+1}}; X_{S_i}| Y_{i+1})\!\leq\! \beta \}\!\!\!
\end{align}
is optimal to \eqref{eq_problem1}, where $D_i = S_i + Y_i$, $\mathbb{E}_{Y}$ denotes the expectation with respect to the random variable  $Y$, and $\beta$ is  determined by solving \eqref{coro2_eq1} and \eqref{coro2_eq2}:
\emph{
\begin{align}\label{coro2_eq2}
\beta=\frac{\mathbb{E}\bigg[\sum\limits_{n=D_i}^{D_{i+1}-1} I(X_n;X_{S_i})\bigg]}{\mathbb{E}[D_{i+1}-D_i]}.
\end{align}}
\!\!Further, $\beta$ is exactly the optimal value of \eqref{eq_problem1}, i.e., \emph{$\beta=\bar I_{\text{opt}}$}.
\end{theorem}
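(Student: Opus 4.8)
The plan is to derive Theorem~\ref{thm2} as a direct specialization of Theorem~\ref{thm1}, exploiting the sign flip that turns the mutual-information \emph{maximization} in \eqref{eq_problem1} into the age-penalty \emph{minimization} in \eqref{eq_problem}. First I would set the penalty function to $p(\age) = -r(\age) = -I(X_n;\bm{W}_n)$. By Lemma~\ref{lem1}, $r$ is non-negative and non-increasing, so $p=-r$ is non-positive and non-decreasing; in particular it is an admissible non-decreasing penalty for Theorem~\ref{thm1}, even though it falls outside the non-negative class of \cite{AgeOfInfo2016}. Since $\sum_{n=1}^N I(X_n;\bm{W}_n) = \sum_{n=1}^N r(\age_n) = -\sum_{n=1}^N p(\age_n)$ on every sample path, and $\liminf(-a_N) = -\limsup a_N$, the two objectives are related by $\bar I_{\text{opt}} = -\bar p_{\text{opt}}$, and both extrema are attained by the same policy. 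The assumed finiteness of $\bar I_{\text{opt}}$ guarantees the finiteness of $\bar p_{\text{opt}}$ required by Theorem~\ref{thm1}.

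Next I would transcribe the conclusions of Theorem~\ref{thm1} under the substitution $p=-r$ together with $\beta \mapsto \beta' := -\beta$. The threshold rule \eqref{thm1_eq1} uses the condition $\mathbb{E}[p(n+Y_{i+1}-S_i)\mid S_i,Y_i]\geq \beta'$; writing $p=-r$ and using that $Y_{i+1}$ is i.i.d.\ and independent of the Markov chain gives $\mathbb{E}[p(n+Y_{i+1}-S_i)\mid S_i,Y_i] = -\mathbb{E}_{Y_{i+1}}[r(n+Y_{i+1}-S_i)] = -\mathbb{E}_{Y_{i+1}}[I(X_{n+Y_{i+1}};X_{S_i}\mid Y_{i+1}=y_{i+1})]$, so the inequality $\geq \beta'=-\beta$ becomes $\mathbb{E}_{Y_{i+1}}[I(X_{n+Y_{i+1}};X_{S_i}\mid Y_{i+1}=y_{i+1})] \leq \beta$, which is exactly the first line of \eqref{coro2_eq1}. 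The second equality in \eqref{coro2_eq1} is then merely the definition of conditional mutual information: for each fixed realization $y_{i+1}$ we have $r(n+y_{i+1}-S_i)=I(X_{n+y_{i+1}};X_{S_i})$, and averaging over $Y_{i+1}$ produces $I(X_{n+Y_{i+1}};X_{S_i}\mid Y_{i+1})$.

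The fixed-point equation follows the same way: substituting $p=-r$ into \eqref{thm1_eq4} gives $\beta' = -\mathbb{E}[\sum_{n=D_i}^{D_{i+1}-1} r(n-S_i)]/\mathbb{E}[D_{i+1}-D_i]$, and using $r(n-S_i)=I(X_n;X_{S_i})$ together with $\beta=-\beta'$ yields \eqref{coro2_eq2}. Finally, Theorem~\ref{thm1} asserts $\beta'=\bar p_{\text{opt}}$, whence $\beta=-\beta'=-\bar p_{\text{opt}} = \bar I_{\text{opt}}$, identifying the threshold with the optimal value of \eqref{eq_problem1}; non-negativity of mutual information forces $\beta=\bar I_{\text{opt}}\geq 0$. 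I do not anticipate a serious obstacle, since the whole argument is a change of sign applied to an already-proven theorem. The one point demanding care is the bookkeeping of the conditioning on $Y_{i+1}$: I must verify that the conditional-mutual-information notation in \eqref{coro2_eq1} is precisely the $Y_{i+1}$-expectation of the (realization-wise deterministic) age penalty $r(n+y_{i+1}-S_i)$, and is \emph{not} a mutual information in which the receiver is additionally assumed to know $Y_{i+1}$; reconciling these two readings is what makes the stated two-line form of \eqref{coro2_eq1} legitimate.
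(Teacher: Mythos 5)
Your proposal is correct and follows exactly the route the paper takes: the paper obtains Theorem~\ref{thm2} by specializing Theorem~\ref{thm1} to the penalty $p(\age_n)=-I(X_n;\bm{W}_n)=-r(\age_n)$, and your sign bookkeeping ($\beta=-\beta'$, $\bar I_{\text{opt}}=-\bar p_{\text{opt}}$ via $\liminf(-a_N)=-\limsup a_N$), the use of independence of $Y_{i+1}$ from $(S_i,Y_i)$, and the reading of $I(X_{n+Y_{i+1}};X_{S_i}\mid Y_{i+1})$ as the $Y_{i+1}$-average of $r(n+y_{i+1}-S_i)$ are precisely the details the paper leaves implicit.
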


The optimal sampling policy in \eqref{coro2_eq1} and \eqref{coro2_eq2} has a nice structure:
%\noindent\textbf{The Rationale behind the Optimal Sampling Policy:} 
{The next sampling time $S_{i+1}$ is determined based on the mutual information between the freshest received sample ${X}_{S_i}$ and the signal value ${X}_{D_{i+1}}$, where $D_{i+1}=S_{i+1}+Y_{i+1}$ is the delivery time of the $(i+1)$-th sample. ~Because the transmission time $Y_{i+1}$ will be known by both the transmitter and receiver at time $D_{i+1}=S_{i+1}+Y_{i+1}$, $Y_{i+1}$ is the side information that is characterized by the conditional mutual information $I[X_{n+Y_{i+1}}; {X}_{S_i} | Y_{i+1} ]$. The conditional mutual information $I[X_{n+Y_{i+1}}; {X}_{S_i} | Y_{i+1} ]$ decreases as time $n$ grows. According to \eqref{coro2_eq1}, the $(i+1)$-th sample is generated at the smallest integer time instant $n$ satisfying two conditions: \emph{(i)} The $i$-th sample has already been delivered, i.e., $n\geq D_i$, and \emph{(ii)} The conditional mutual information $I[X_{n+Y_{i+1}}; {X}_{S_i} | Y_{i+1} ]$ has reduced to be no greater than a pre-determined threshold $\beta$. In addition, according to \eqref{coro2_eq2}, the threshold $\beta$ is equal to the optimum objective value $\bar I_{\text{opt}}$ in \eqref{eq_problem1}, i.e., the optimum of the time-average expected mutual information $\liminf_{N\rightarrow \infty}\frac{1}{N}~\mathbb{E}[\sum_{n=1}^NI(X_n; \bm{W}_n)]$ that we are maximizing.} Note that the sampling times $S_i$ and delivery times $D_i$ on the right-hand side of \eqref{coro2_eq2} depends on $\beta$. Hence, $\beta$ is a fixed point of \eqref{coro2_eq2}.

 \begin{figure}%[!t]
\centering
\includegraphics[width=0.46\textwidth]{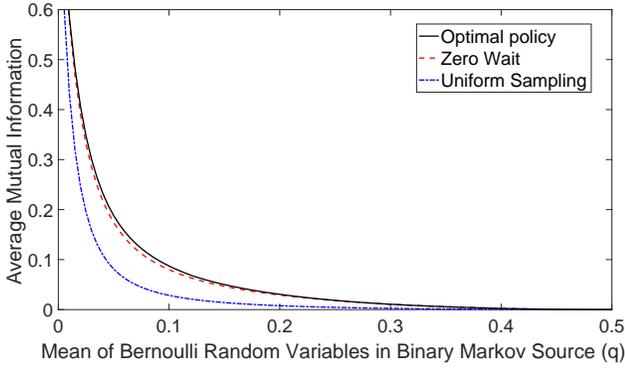}   
\caption{Time-average expected mutual information vs the mean $q$ of Bernoulli random variables $W_n$  for the binary Markov source in \eqref{eq_binary}.}
\label{fig_T2}
%\vspace{-3mm}
\end{figure}

%We note that $\beta$ is \emph{not} the time-average expectation of the conditional mutual information $I[X_{n+Y_{i+1}}; {X}_{S_i} | Y_{i+1} ]$. 
The optimal sampling policy  is illustrated in Fig. \ref{fig_policy}, where the service time $Y_i$ is equal to either $1$ or $5$  with equal probability. The service time $S_i$, delivery time $D_i$, and conditional mutual information $I[X_{n+Y_{i+1}}; {X}_{S_i} | Y_{i+1} ]$ of the samples are depicted in the figure. One can observe that if the service time of the previous sample is $Y_i = 1$, the sampler will wait until the conditional mutual information $I[X_{n+Y_{i+1}}; {X}_{S_i} | Y_{i+1} ]$ drops below the threshold $\beta$ and then take the next sample; if the service time of the previous sample is $Y_i =5$,  the next sample is taken upon the delivery of the previous sample at time $D_i$, because  $I[X_{n+Y_{i+1}}; {X}_{S_i} | Y_{i+1} ]$ is below $\beta$ then.

Notice that in the optimal sampling policy \eqref{coro2_eq1} and \eqref{coro2_eq2}, there is at most one sample in transmission at any time and no sample is waiting in the queue. 
This is different from the traditional uniform sampling policy, in which the waiting time in the queue can be quite high and, as a result, the freshness of information is low. This phenomenon will be illustrated by our numerical results in Section \ref{sec:numerical}.

%We can observe that the conditional mutual information $I[X_{n+Y_{i+1}}; {X}_{S_i} | Y_{i+1} ]$

%After that, the condition mutual information $I[X_{n+Y_{i+1}}; {X}_{S_i} | Y_{i+1} ]$ continues to decrease until the $(i+1)$-th sampling is delivered at time instant $D_{i+1}$. 

%\subsection{The Differences from \cite{AgeOfInfo2016}}\label{sec_dis}
%Theorem~\ref{thm1} is similar to Theorem 3 of \cite{AgeOfInfo2016}, but 
%

\ignore{

By following 
%developing a solution approach complementary 
% to 
the intuitions in our recent work \cite{SunInfocom2016,SunJournal2017,SunISIT2017}, we anticipate that the optimal sampling policy for solving Problem \eqref{eq:problem} is of the  form
\vspace{-0mm}
\begin{align}\label{eq_update}
S_{i+1} &= \min\{t\geq S_i + Y_i: \mathbb{E}_{Y_{i+1}}\left\{ I\left[X_{t+Y_{i+1}}; X_{S_i}\right]\right\} \geq \beta \}\nonumber\\
&= \min\{t\geq S_i + Y_i: I\left[X_{t+Y_{i+1}}; X_{S_i} | Y_{i+1} \right] \geq \beta \},
\vspace{-3mm}
\end{align} 
where $\mathbb{E}_{Y_{i+1}}$ represents the conditional expectation over the random variable $Y_{i+1}$ and $\beta$ is a  threshold that can be calculated by solving a fixed-point equation using low-complexity bisection search \cite{SunInfocom2016,SunJournal2017}. The optimal sampling policy \eqref{eq_update} has an interesting explanation:
%\noindent\textbf{The Rationale behind the Optimal Sampling Policy:} 
\emph{The next sampling time $S_{i+1}$ is determined based on the mutual information between the latest received sample $X_{S_i}$ and the signal value ${X}_n$ upon the delivery of the $(i+1)$-th sample at time $t=S_{i+1}+Y_{i+1}$.~Because the transmission time $Y_{i+1}$ will be known by the receiver at time $S_{i+1}+Y_{i+1}$, $Y_{i+1}$ is the side information at the receiver, resulting in  a condition mutual information $I\left[X_{t+Y_{i+1}}; X_{S_i} | Y_{i+1} \right]$. The condition mutual information $I\left[X_{t+Y_{i+1}}; X_{S_i} | Y_{i+1} \right]$ decreases as time $t$ grows, and the $(i+1)$-th sample is taken once $I\left[X_{t+Y_{i+1}}; X_{S_i} | Y_{i+1} \right]$  reduces to a pre-determined threshold.} 

\noindent\textbf{Difference from the Classic Information Theory:} In the classic framework of Information Theory  \cite{Shannon1948,Chen2013}, a huge number of samples or symbols are combined to form long codewords for reducing the reconstruction error. This approach targets on perfect recovery of the signal $X_n$ with zero error, but after a long delay.~However, our model can effectively characterize how the time spent on the sampling, transmission, reconstruction, and feedback procedures affects the freshness of information. In particular, 
information ``ageing''  is considered as a process that the amount of preserved information for \emph{inferring} or \emph{estimating} the real-time signal value decreases as the age $\age$ grows. 
The aim of problem \eqref{eq:problem} is to maximize the information preserved in the received samples, instead of perfect  reconstruction of the signal. One research task on minimizing the signal reconstruction error will be presented in \S\ref{sec:sampling_example}.
}

\ifreport
\section{Proof of Theorem \ref{thm1}}\label{sec_analysis}

%In this section, we prove Theorem \ref{thm1} in two steps: 
%%First, we show that given any sampling policy, the MMSE estimation policy \eqref{eq_esti} is optimal for minimizing the objective function in \eqref{eq_DPExpected}. The remaining task is to solve the optimal sampling problem given an MMSE estimator. Second, 
%First, we  show that no sample should be generated when the server is busy, which simplifies the online sampling problem. Second, we use  \cite[Lemma 2]{Sun_reportISIT17} to further simplify the online sampling problem and then use . the  
%
%in  the Lagrangian dual problem of the simplified problem, and decompose the Lagrangian dual problem into a series of \emph{mutually independent} per-sample control problems. Each of these per-sample control problems has a simple solution. Next, we show that the Lagrangian duality gap of our Markov decision problem is zero. By this, Problem \eqref{eq_problem} is solved. The details are as follows.

\subsection{Simplification of Problem \eqref{eq_problem}}

In \cite{AgeOfInfo2016,SunISIT2017}, it was shown that \emph{no new sample should be taken when the server is busy}. The reason is as follows: If a sample is taken when the server is busy, it has to wait in the queue for its transmission opportunity; meanwhile the sample is becoming stale. A better strategy is to take a new sample once the server becomes idle. By using the sufficient statistic of the Markov chain $X_n$, one can show that the second strategy is better. 

Because of this, we only need to consider a sub-class of sampling policies $\Pi_1\subset\Pi$ in which each sample is generated and submitted to the server after the previous sample is delivered, i.e.,
\begin{align}
\Pi_1 %&= \{\pi\in\Pi: S_{i} = G_{i} \geq D_{j} \text{ for all $i$ and $j=1,\ldots,i$}\} \nonumber \\
 &= \{\pi\in\Pi: S_{i+1} \geq D_{i} = S_i + Y_i \text{ for all $i$}\}. 
\end{align}
Let $Z_i = S_{i+1} -D_{i} \geq0$ represent the waiting time between the delivery time $D_i$ of sample $i$ and the generation time $S_{i+1}$ of sample $i+1$. Since $S_1 = 0$, we have
$S_i =  S_{1} +\sum_{j=1}^{i} (Y_{j} + Z_j)=\sum_{j=1}^{i} (Y_{j} + Z_j)$ and $D_i = S_i + Y_i$. Given $(Y_1,Y_2,\ldots)$, $(S_1,S_2,\ldots)$ is uniquely determined by $(Z_1,Z_2,\ldots)$. Hence, one can also use $\pi = (Z_1,Z_2,\ldots)$ to represent a sampling policy in $\Pi_1$.

Because $T_i $ is a regenerative process, 
using the renewal theory in \cite{Ross1996} and \cite[Section 6.1]{Haas2002}, one can show that in Problem \eqref{eq_problem}, $\frac{1}{i} 
\mathbb{E}[S_i]$ and $\frac{1}{i} 
\mathbb{E}[D_i]$ are convergent sequences and %$(Z_1,Z_2,\ldots)$ is also stationary and ergodic. Hence, 
%Because  $(Y,Y_2,\ldots)$ are \emph{i.i.d.} and $(Z_1,Z_2,\ldots)$ are stationary and ergodic, we can obtain
\begin{align}
&\limsup_{N\rightarrow \infty}\frac{1}{N}~\mathbb{E}\left[\sum_{n=1}^N p(\age_n)\right] \nonumber
\\
=& \lim_{i\rightarrow \infty}\frac{\mathbb{E}\left[\sum_{n=1}^{D_i} p(\age_n)\right]}{\mathbb{E}[D_i]} \nonumber\\
=& \lim_{i\rightarrow \infty}\frac{\sum_{j=1}^{i}\mathbb{E}\left[\sum_{n=D_{j}}^{D_{j+1}-1}p(\age_n)\right]}{\sum_{j=1}^{i} \mathbb{E}\left[Y_j+Z_j\right]}.\nonumber
%= & \lim_{n\rightarrow \infty}\frac{\sum_{i=1}^n\mathbb{E}\left[(W_{S_{i}+Y_i+Z_i}-W_{S_i})^4\right]}{ 6\sum_{i=1}^n \mathbb{E}\left[Y_i+Z_i\right]} +  \mathbb{E}\left[Y\right],\nonumber
\end{align}
In addition, for each policy in $\Pi_1$, it holds that $D_i \leq D_{i+1}$. In this case, the age $\age_n$ in \eqref{eq_age} can be expressed as 
\begin{align}
\age_n = n- S_i,~\text{if}~D_i\leq n < D_{i+1}. \nonumber
\end{align}
Hence,
\begin{align}\label{eq_sum1}
\sum_{n=D_{i}}^{D_{i+1}-1}\!\!p(\age_n)=\!\!\sum_{n=D_{i}}^{D_{i+1}-1}\!\!p(n-S_i)=\!\!\sum_{n=Y_{i}}^{Y_i+Z_i+Y_{i+1}-1}\!\!p(n), 
\end{align}
which is a function of $(Y_i,Z_i,Y_{i+1})$. Define 
\begin{align}\label{eq_sum}
q(Y_i,Z_i,Y_{i+1})= \sum_{n=Y_{i}}^{Y_i+Z_i+Y_{i+1}-1}\!\!p(n), 
\end{align}
then  \eqref{eq_problem} can be simplified as 
\begin{align}\label{eq_problem_S1}
\bar p_{\text{opt}} = \inf_{\pi\in\Pi_1}~&\lim_{i\rightarrow \infty}\frac{\sum_{j=1}^{i}\mathbb{E}\left[q(Y_j,Z_j,Y_{j+1})\right]}{\sum_{j=1}^{i} \mathbb{E}\left[Y_j+Z_j\right]}.
%\\
%\text{s.t.}~~& \lim_{i\rightarrow \infty} \frac{1}{i}\sum_{j=1}^{i} \mathbb{E}\left[Y_j+Z_j\right] \geq \frac{1}{f_{\max}}.
\end{align}

In order to solve \eqref{eq_problem_S1}, let us consider the following Markov decision problem with a parameter $c\geq 0$:
%In order to solve \eqref{eq_Simple}, let us consider the following Markov decision with a parameter $c$:
\begin{align}\label{eq_SD}
\!\!h(c)\!\triangleq\!\inf_{\pi\in\Pi_1}&\lim_{i\rightarrow \infty}\frac{1}{i}\sum_{j=0}^{i-1}\mathbb{E}\left[q(Y_j,Z_j,Y_{j+1})-c(Y_j+Z_j)\right]
\!\!\!\!
%\text{s.t.}~&\lim_{i\rightarrow \infty} \frac{1}{i}\sum_{j=1}^{i} \mathbb{E}\left[Y_j+Z_j\right] \geq \frac{1}{f_{\max}},\nonumber
\end{align}
where $h(c)$ is the optimum  value of \eqref{eq_SD}. Similar with Dinkelbach's method  \cite{Dinkelbach67} for nonlinear fractional programming, the following lemma in \cite{Sun_reportISIT17} also holds for our Markov decision problem \eqref{eq_problem_S1}:

\begin{lemma} \label{lem_ratio_to_minus} \cite[Lemma 2]{Sun_reportISIT17}
The following assertions are true:
\begin{itemize}
\item[(a).] \emph{$\bar p_{\text{opt}}  \gtreqqless c $} if and only if $h(c)\gtreqqless 0$. 
\item[(b).] If $h(c)=0$, the solutions to \eqref{eq_problem_S1}
and \eqref{eq_SD} are identical. 
\end{itemize}
\end{lemma}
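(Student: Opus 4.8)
The plan is to recognize \eqref{eq_problem_S1}--\eqref{eq_SD} as an instance of Dinkelbach's linear-fractional transformation and to port the classical equivalence to this Markov decision setting. Throughout, write $a(\pi)=\lim_{i\to\infty}\frac1i\sum_{j=1}^i\mathbb{E}[q(Y_j,Z_j,Y_{j+1})]$ and $b(\pi)=\lim_{i\to\infty}\frac1i\sum_{j=1}^i\mathbb{E}[Y_j+Z_j]$, which exist because $\{T_i\}$ is regenerative (renewal reward), so that \eqref{eq_problem_S1} reads $\bar p_{\text{opt}}=\inf_{\pi\in\Pi_1}a(\pi)/b(\pi)$ and \eqref{eq_SD} reads $h(c)=\inf_{\pi\in\Pi_1}\{a(\pi)-c\,b(\pi)\}$. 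The single structural fact I would use repeatedly is that the denominator is uniformly bounded away from zero: since $Z_j\ge0$ and the service times are i.i.d.\ and policy-independent, $b(\pi)\ge\mathbb{E}[Y_1]>0$ for every $\pi\in\Pi_1$. I would first record the per-policy identity: because $b(\pi)>0$, for any fixed $\pi$ and any relation $R\in\{<,=,>\}$ we have $a(\pi)/b(\pi)\, R\, c \iff a(\pi)-c\,b(\pi)\, R\, 0$. This trivial equivalence is exactly what part (b) rests on and what drives the strict cases of part (a).

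For part (a) I would prove two ``weak'' equivalences and then squeeze. First, $\bar p_{\text{opt}}\ge c \iff h(c)\ge0$: if $\bar p_{\text{opt}}\ge c$ then $a(\pi)/b(\pi)\ge c$ for every $\pi$, hence $a(\pi)-c\,b(\pi)\ge0$ and $h(c)\ge0$; conversely $h(c)\ge0$ gives $a(\pi)-c\,b(\pi)\ge0$, hence $a(\pi)/b(\pi)\ge c$ for all $\pi$ and $\bar p_{\text{opt}}\ge c$. Contraposing this already yields the strict case $\bar p_{\text{opt}}<c\iff h(c)<0$. Second, the strict direction $\bar p_{\text{opt}}>c\Rightarrow h(c)>0$ follows from a uniform gap: with $\delta=\bar p_{\text{opt}}-c>0$ one has $a(\pi)/b(\pi)\ge\bar p_{\text{opt}}=c+\delta$, so $a(\pi)-c\,b(\pi)\ge\delta\,b(\pi)\ge\delta\,\mathbb{E}[Y_1]>0$ uniformly in $\pi$, whence $h(c)\ge\delta\,\mathbb{E}[Y_1]>0$. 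These cover every relation except the boundary $\bar p_{\text{opt}}=c$, where I must show $h(c)=0$; I will verify below that the boundary value, together with these two equivalences, forces all three iffs that constitute $\bar p_{\text{opt}}\gtreqqless c\iff h(c)\gtreqqless0$.

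The boundary case is the main obstacle, and it is precisely where the fractional and linear problems could a priori diverge: the infimum defining $\bar p_{\text{opt}}$ need not be attained, and because a policy may wait arbitrarily long ($Z_j\to\infty$) the frame length $b(\pi)$ is unbounded, so one cannot simply push a ratio-minimizing sequence through the product $b(\pi)\big(a(\pi)/b(\pi)-c\big)$ and conclude it tends to $0$. My plan is to avoid sequences entirely and argue by continuity: $c\mapsto h(c)=\inf_\pi\{a(\pi)-c\,b(\pi)\}$ is a pointwise infimum of affine functions of $c$, hence \emph{concave}, and it is finite on a neighborhood of $\bar p_{\text{opt}}$ (finiteness from above holds because any single policy, for instance the zero-wait policy, furnishes a finite upper bound, while finiteness just above $\bar p_{\text{opt}}$ holds because enlarging the frame drives $a(\pi)/b(\pi)$ up toward its supremum, so no frame-unbounded policy can send $a(\pi)-c\,b(\pi)$ to $-\infty$ for $c$ close to $\bar p_{\text{opt}}$), hence $h$ is continuous there. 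By the equivalences just proved, $h(c)>0$ for $c<\bar p_{\text{opt}}$ and $h(c)<0$ for $c>\bar p_{\text{opt}}$; letting $c\uparrow\bar p_{\text{opt}}$ gives $h(\bar p_{\text{opt}})\ge0$ and $c\downarrow\bar p_{\text{opt}}$ gives $h(\bar p_{\text{opt}})\le0$, so $h(\bar p_{\text{opt}})=0$. Combining all cases proves (a).

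Finally, part (b) is immediate from (a) and the per-policy identity. If $h(c)=0$ then (a) gives $\bar p_{\text{opt}}=c$, so a policy $\pi^\star$ is optimal for \eqref{eq_SD} iff $a(\pi^\star)-c\,b(\pi^\star)=h(c)=0$, iff (per-policy identity with ``$=$'', using $b(\pi^\star)>0$) $a(\pi^\star)/b(\pi^\star)=c=\bar p_{\text{opt}}$, iff $\pi^\star$ is optimal for \eqref{eq_problem_S1}; hence the two optimal sets coincide. A fallback for the boundary step, should a fully self-contained continuity argument prove awkward, is to invoke attainment of $\bar p_{\text{opt}}$ by a stationary policy $\pi^\star$, giving $h(\bar p_{\text{opt}})=a(\pi^\star)-\bar p_{\text{opt}}\,b(\pi^\star)=0$ directly; I would nonetheless prefer the continuity route, since it keeps the lemma logically independent of the later existence result in Theorem \ref{thm1}.
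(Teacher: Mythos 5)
The paper itself never proves this lemma---it is imported verbatim from \cite[Lemma 2]{Sun_reportISIT17} with only the remark that it is ``similar with Dinkelbach's method''---so your proposal should be judged as a reconstruction of that standard fractional-programming argument, and most of it is correct. The per-policy identity, the weak equivalence $\bar p_{\text{opt}}\geq c \Leftrightarrow h(c)\geq 0$, its contrapositive giving the strict case $\bar p_{\text{opt}}<c\Leftrightarrow h(c)<0$, the uniform-gap argument for $\bar p_{\text{opt}}>c\Rightarrow h(c)>0$ (which correctly isolates the one structural input, $b(\pi)\geq \mathbb{E}[Y_1]$, though you should state explicitly that $\mathbb{E}[Y_1]>0$ is being assumed), and the derivation of part (b) from part (a) are all sound, and you correctly identify that the entire difficulty concentrates at the boundary $\bar p_{\text{opt}}=c$. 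Your instinct to avoid the fallback is also right: within this paper's architecture the attainment of $\bar p_{\text{opt}}$ by a stationary policy is a \emph{consequence} of this lemma (via Lemmas \ref{lem_SRoptimal} and \ref{lem4} feeding into Theorem \ref{thm1}), so invoking it here would be circular.

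The genuine gap is your finiteness claim for $h$ just above $\bar p_{\text{opt}}$, which is false in general, and with it the continuity argument as stated. Let $P^*=\lim_{n\to\infty}p(n)$. Since $p$ is non-decreasing, each frame satisfies $q(Y_i,Z_i,Y_{i+1})\leq P^*(Z_i+Y_{i+1})$, so in the Ces\`aro limit $a(\pi)\leq P^* b(\pi)$ and hence $\bar p_{\text{opt}}\leq P^*$; moreover, for any $c>P^*$, letting $Z\to\infty$ drives $a(\pi)-c\,b(\pi)\to-\infty$, so $h\equiv-\infty$ on $(P^*,\infty)$. Whenever $\bar p_{\text{opt}}=P^*$ (simplest instance: $p$ constant, where every policy has ratio exactly $p(0)$), $h(c)=-\infty$ for \emph{every} $c>\bar p_{\text{opt}}$: the point $\bar p_{\text{opt}}$ lies on the boundary of the effective domain of the concave function $h$, right-continuity fails, and an infimum of affine functions is only upper semicontinuous---which bounds $h(\bar p_{\text{opt}})$ from \emph{below}, the wrong direction for the needed $h(\bar p_{\text{opt}})\leq 0$. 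Your heuristic ``enlarging the frame drives $a(\pi)/b(\pi)$ up toward its supremum'' is precisely what breaks there (the ratio stays flat). The repair is a dichotomy. If $\bar p_{\text{opt}}<P^*$: for any $c<P^*$ choose $n_0$ with $p(n_0)\geq c$; then $\sum_{n=Y_i}^{Y_i+Z_i+Y_{i+1}-1}\bigl(p(n)-c\bigr)\geq -n_0\max_{0\leq n\leq n_0}\bigl(c-p(n)\bigr)$, a bound uniform over policies, so $h$ is finite on $(-\infty,P^*)$, $\bar p_{\text{opt}}$ is an interior point of the domain, and your concavity-plus-squeeze argument goes through. If $\bar p_{\text{opt}}=P^*$: then $a(\pi)/b(\pi)\leq P^*=\bar p_{\text{opt}}$ forces $a(\pi)/b(\pi)=\bar p_{\text{opt}}$ for every $\pi\in\Pi_1$, so $a(\pi)-\bar p_{\text{opt}}\,b(\pi)=0$ identically and $h(\bar p_{\text{opt}})=0$ trivially. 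With that case split inserted, your proof is complete and matches the Dinkelbach route the paper invokes by citation.
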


Hence, the solution to \eqref{eq_problem_S1} can be obtained by solving \eqref{eq_SD} and seeking $\bar p_{\text{opt}}\in\mathbb{R}$  that satisfies
\begin{align}\label{eq_c}
h(\bar p_{\text{opt}})=0. 
\end{align}

\subsection{Optimal Solution of  \eqref{eq_SD} for \emph{$c = \bar p_{\text{opt}}$} }
%Although \eqref{eq_SD} is not a convex optimization problem, it is possible to use the Lagrangian dual approach to solve \eqref{eq_SD} and show that it admits no duality gap. 
%
%When $c = \bar p_{\text{opt}}$, define the following Lagrangian
%\begin{align}\label{eq_Lagrangian}
%& L(\pi;\lambda) \nonumber\\
%=&\lim_{i\rightarrow \infty}\frac{1}{i}\sum_{j=0}^{i-1}\!\mathbb{E}\!\left[q(Y_j,Z_j,Y_{j+1})\!-\! \bar p_{\text{opt}} (Y_j\!+\!Z_j)\right] + \frac{\lambda}{f_{\max}},
%\end{align}
%where $\lambda\geq0$ is the dual variable. 
%Let
%\begin{align}\label{eq_primal}
%g(\lambda) \triangleq \inf_{\pi\in\Pi_1}  L(\pi;\lambda).
%\end{align}
%Then, the Lagrangian dual problem of \eqref{eq_SD} is defined by
%\begin{align}\label{eq_dual}
%d \triangleq \max_{\lambda\geq 0}g(\lambda),
%\end{align}
%where $d$ is the optimum value of \eqref{eq_dual}. 
%Weak duality \cite{Bertsekas2003,Boyd04} implies that $d \leq h(\bar p_{\text{opt}})$. In Section \ref{sec:dual}, we will establish strong duality, i.e., $d = h(\bar p_{\text{opt}})$.

%\subsection{The Optimal Solution to \eqref{eq_primal}}

%We solve \eqref{eq_primal} in several steps: First, we  prove that there exists a \emph{stationary randomized policy} that is optimal for  \eqref{eq_primal}. Next, we prove that there exists a \emph{stationary deterministic policy} that is optimal for  \eqref{eq_primal}. Finally, we find the optimal stationary deterministic policy. 

Next, we present an optimal solution to \eqref{eq_SD} for $c = \bar p_{\text{opt}}$.  

\begin{definition}
A policy $\pi\in\Pi_{1}$ is said to be a \emph{stationary randomized} policy, if it observes $Y_i$ and then chooses a waiting time $Z_i\in[0,\infty)$ 
based on the observed value of $Y_i$, %. Specifically, $Z_i$ is determined 
according to a conditional probability measure $p(y,A)\triangleq\Pr[Z_i\in A| Y_i=y]$ that is invariant for all $i=1,2,\ldots$ Let $\Pi_{\text{SR}}$ ($\Pi_{\text{SR}}\subset \Pi_1$)  denote the set of stationary randomized policies, defined by
\begin{align}
&\Pi_{\text{SR}}\!=\!\{\pi\in\Pi_1: \text{Given the observation $Y_i=y_i$, $Z_i$ is chosen}\nonumber\\
&\text{according to the probability measure } p(y_i,A) \text{ for all } i\}.\nonumber
\end{align}
\end{definition}

\begin{lemma}\label{lem_SRoptimal} %(Optimality of Stationary Deterministic Policies)
If the service times $Y_i$ are {i.i.d.}, then there exists a stationary randomized policy that is optimal for solving \eqref{eq_SD} with \emph{$c = \bar p_{\text{opt}}$}.
\end{lemma}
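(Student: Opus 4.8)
The plan is to exploit the very special structure of the Markov decision problem \eqref{eq_SD}: because the service times $Y_i$ are i.i.d.\ and the \emph{next} service time $Y_{i+1}$ is independent of the current ``state'' $Y_i$ and of the chosen waiting time $Z_i$, the underlying state process is \emph{uncontrolled}. Consequently the problem decouples across samples into a family of single-stage optimizations indexed by the observed value of $Y_i$, and the optimal rule in each stage depends only on $Y_i$ --- which is precisely a stationary randomized policy. First I would reduce the per-stage cost by averaging over the future service time: since $Y_{i+1}\perp(Y_i,Z_i)$ for every causal policy in $\Pi_1$ (the waiting time $Z_i$ is fixed by the history up to $D_i$, before $Y_{i+1}$ is drawn), I define the effective per-stage cost
\begin{align}
g_c(y,z) = \mathbb{E}\!\left[q(y,z,Y_{i+1})\right] - c\,(y+z),\nonumber
\end{align}
so that \eqref{eq_SD} reads $h(c)=\inf_{\pi\in\Pi_1}\lim_{i\to\infty}\frac1i\sum_{j=0}^{i-1}\mathbb{E}[g_c(Y_j,Z_j)]$.

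Second, I would prove a matching pair of bounds. For an arbitrary history-dependent randomized policy, the marginal law of $Y_j$ is the common service-time distribution regardless of the policy, while given $Y_j=y$ the action $Z_j$ has some conditional distribution. Conditioning first on $(Y_j,Z_j)$ to average out $Y_{j+1}$ and then on $Y_j$ gives, for every $j$,
\begin{align}
\mathbb{E}[g_c(Y_j,Z_j)] \ge \mathbb{E}_{Y}\Big[\inf_{z\in\mathbb{N}} g_c(Y,z)\Big],\nonumber
\end{align}
and hence $h(c)\ge \mathbb{E}_{Y}[\inf_{z} g_c(Y,z)]$. The reverse inequality is achieved by the stationary policy that, upon observing $Y_i=y$, selects a minimizer $z^\star(y)\in\argmin_{z\in\mathbb{N}} g_c(y,z)$: under this rule $(Y_i,Z_i)$ is i.i.d., so $\frac1i\sum_{j}\mathbb{E}[g_c(Y_j,Z_j)]=\mathbb{E}_{Y}[g_c(Y,z^\star(Y))]=\mathbb{E}_{Y}[\inf_z g_c(Y,z)]$ is constant in $i$ and meets the lower bound. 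Since a deterministic stationary rule is a special case of $\Pi_{\text{SR}}$, this produces the desired optimal stationary randomized policy.

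Third, I must supply two ingredients: a measurable selection $y\mapsto z^\star(y)$ (routine for the countable action set $z\in\mathbb{N}$) and, more delicately, \emph{attainment} of $\inf_z g_c(y,z)$ for a.e.\ $y$. This is where $c=\bar p_{\text{opt}}$ and the standing finiteness assumptions enter, and I expect it to be the main obstacle. From $h(\bar p_{\text{opt}})=0$ in \eqref{eq_c} the value $\inf_z g_c(y,z)\le g_c(y,0)$ is already finite a.e.; what must be excluded is the degenerate case in which the infimum is approached only as $z\to\infty$ (``never sample again''), which no waiting-time distribution on $\mathbb{N}$ can realize. The increment $g_c(y,z+1)-g_c(y,z)=\mathbb{E}[p(y+z+Y_{i+1})]-c$ converges to $p_\infty - c$ as $z\to\infty$, where $p_\infty=\lim_{n\to\infty}p(n)$ (the limit exists since $p$ is non-decreasing, and $\mathbb{E}[p(y+z+Y_{i+1})]\to p_\infty$ by monotone convergence). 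Thus $g_c(y,\cdot)$ is coercive on $\mathbb{N}$ --- so a finite minimizer exists --- as soon as $p_\infty>c=\bar p_{\text{opt}}$.

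Finally, I would argue that this strict inequality holds under the model's non-degeneracy: because every admissible policy has finite, positive mean inter-sampling time and samples infinitely often, it keeps the average age penalty strictly below its asymptotic value $p_\infty$, whence $\bar p_{\text{opt}}<p_\infty$. For the mutual-information application of Theorem~\ref{thm2} this is transparent, since there $p=-r$ with $p_\infty=\lim_{\age\to\infty}\big(-r(\age)\big)=0>-\bar I_{\text{opt}}=\bar p_{\text{opt}}$, so coercivity and hence attainment are immediate.
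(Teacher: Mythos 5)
Your core argument is the same as the paper's: because the state process $Y_i$ is i.i.d.\ and uncontrolled, problem \eqref{eq_SD} decouples into per-sample minimizations whose solution depends on the history only through $Y_i$ (the paper phrases this as $Y_j$ being a sufficient statistic for choosing $Z_j$), and since the joint law of $(Y_i,Y_{i+1})$ is the same for all $i$, the optimal per-stage rule can be taken invariant in $i$, i.e.\ stationary randomized. Your explicit sandwich bound $h(c)\ge\mathbb{E}_Y\left[\inf_{z} g_c(Y,z)\right]$, matched from above by the stationary selector $z^\star(Y)$, is a cleaner rendering of this step than the paper's, and you correctly isolate the one issue the paper passes over in silence: whether $\inf_{z\in\mathbb{N}} g_c(y,z)$ is actually attained.

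However, your resolution of the attainment issue contains a genuine error. The claim that model non-degeneracy forces $\bar p_{\text{opt}}<p_\infty$ is false. Since every achievable age satisfies $\age_n\ge Y_i$, if $p(n)=p_\infty$ for all $n\ge \mathrm{ess\,inf}\,Y$ (e.g.\ $p$ constant, or $p(n)=\min(n,10)$ with $Y_i\ge 10$ a.s.), then \emph{every} admissible policy has time-average penalty exactly $p_\infty$, so $\bar p_{\text{opt}}=p_\infty$ and your coercivity argument collapses. This situation even occurs in the paper itself: for the binary source with $q=\tfrac{1}{2}$ one has $r(\age)=0$ for every $\age\ge 1$, hence $\bar I_{\text{opt}}=0$ and $p_\infty=0=\bar p_{\text{opt}}$, contradicting the ``transparent'' strict inequality you assert for the mutual-information application. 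The gap is fixable by a case split. Note first that $\bar p_{\text{opt}}\le p_\infty$ always, because $p(\age_n)\le p_\infty$. Case (i): there exists $n_0\ge \mathrm{ess\,inf}\,Y$ with $p(n_0)<p_\infty$; then the zero-wait policy (a stationary policy) spends a positive fraction of time at age $\mathrm{ess\,inf}\,Y\le n_0$, where by monotonicity $p\le p(n_0)<p_\infty$, so its value is strictly below $p_\infty$; hence $\bar p_{\text{opt}}<p_\infty$ and your coercivity/attainment argument goes through. Case (ii): $p(n)=p_\infty$ for all $n\ge \mathrm{ess\,inf}\,Y$; then $\bar p_{\text{opt}}=p_\infty$ and every increment $g_c(y,z+1)-g_c(y,z)=\mathbb{E}\left[p(y+z+Y_{i+1})\right]-\bar p_{\text{opt}}$ equals zero, so $g_c(y,\cdot)$ is constant, every $z$ is a minimizer, and any stationary policy is optimal. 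With this repair your proof is complete, and indeed more rigorous than the paper's, which never verifies attainment at all (its threshold rule \eqref{lem4_eq1} could otherwise be ill-defined).
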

\begin{proof}
%The key proof idea is to show that $Y_i$ is a sufficient statistic for determining $Z_i$ in \eqref{eq_primal}.
%See Appendix \ref{app2} for the details.

In \eqref{eq_SD}, the minimization of the term 
\begin{align}\label{eq_opt_stopping1}
&\mathbb{E}\left[q(Y_j,Z_j,Y_{j+1})- \bar p_{\text{opt}} (Y_j+Z_j)\right]\nonumber\\
=&\mathbb{E}\left[q(Y_j,Z_j,Y_{j+1})- \bar p_{\text{opt}} (Z_j+Y_{j+1})\right]
\end{align}
over $Z_j$ depends on $(Y_1,\ldots,Y_j, Z_1,\ldots,Z_{j-1})$ via $Y_j$. Hence, $Y_j$ is a sufficient statistic for determining $Z_j$ in \eqref{eq_SD}. This means that the rule for determining $Z_i$ can be represented by the conditional probability distribution $\Pr[Z_i\in A| Y_i=y_i]$, and in addition, there exists an optimal  solution $(Z_1,Z_2,\ldots)$ to \eqref{eq_SD}, in which $Z_i$ is determined by solving 
\begin{align}\label{eq_opt_stopping2}
\!\!\!\!\min_{\substack{\Pr[Z_i\in A| Y_i=y_i]}} \!\!\!\!\mathbb{E}\left[q(Y_i,Z_i,Y_{i+1})\!-\! \bar p_{\text{opt}} (Z_j+Y_{j+1})\big|Y_i=y_i \right]\!,\!\!
\end{align}
and then use  the observation $Y_i=y_i$ and the optimal conditional probability distribution $\Pr[Z_i\in A| Y_i=y_i]$ that solves \eqref{eq_opt_stopping2}  to decide $Z_i$. Finally, notice that the minimizer of \eqref{eq_opt_stopping2} depends on the joint distribution of $Y_i$ and $Y_{i+1}$. Because the $Y_i$'s are  \emph{i.i.d.},  the joint distribution of $Y_i$ and $Y_{i+1}$ is invariant for $i=1,2,\ldots$ Hence, the optimal conditional probability measure $\Pr[Z_i\in A| Y_i=y_i]$ solving \eqref{eq_opt_stopping2} is invariant for $i=1,2,\ldots$ By definition,  there exists a stationary randomized policy that is optimal for solving Problem \eqref{eq_SD} with $c = \bar p_{\text{opt}}$, which completes the proof.
\end{proof}

%In the sequel, when we refer to the stationary distribution of a stationary randomized policy $\pi\in \Pi_{\text{SR}}$, we will remove  subscript $i$. In particular, {the random variables $(Y_i,Z_i,Y_{i+1})$ are replaced by $(Y,Z,Y')$}, where $Z\geq0$ is chosen based on the conditional probability measure $\Pr[Z\in A| Y=y] =p(y,A)$ after observing $Y=y$, $Y$ and $Y'$ are \emph{i.i.d.} random variables with the same joint distribution as $Y_i$. 
%By Theorem \ref{lem_SRoptimal}, problem \eqref{eq_primal} can be simplified as
%\begin{align}\label{eq_opt_stopping3}
%\min_{p(y,A)} \mathbb{E}\left[q(Y,Z,Y')\!-\! \bar p_{\text{opt}} (Y+Z)\big|Y \right],
%\end{align}
% where $p(y,A) = \Pr[Z\in A| Y=y]$ is the conditional probability measure corresponding to  a stationary randomized policy. 
Next, by using an idea similar to that in the solution of \cite[Problem 5.5.3]{Bertsekas}, we can obtain 
%the solution to \eqref{eq_opt_stopping2} can be obtained as 
% we can obtain
\begin{lemma}\label{lem4}
If $p: \mathbb{R}\mapsto \mathbb{R}$ is non-decreasing and the service times $Y_i$ are {i.i.d.}, then an optimal solution $(Z_1,Z_2,\ldots)$ of \eqref{eq_SD} is given by
%for any given observation $Y=y$, the optimal $Z$ in \eqref{eq_opt_stopping3} is determined by
\begin{align}\label{lem4_eq1}
Z_i = \min\{n\in \mathbb{N}: \mathbb{E}\!\left[p(Y_i+n+Y_{i+1})|Y_i\right]\! \geq\! \beta \},
\end{align}
where \emph{$\beta = \bar p_{\text{opt}}$}.
\end{lemma}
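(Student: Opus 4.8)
The plan is to reduce Lemma~\ref{lem4} to a one-dimensional minimization and then invoke discrete convexity. By Lemma~\ref{lem_SRoptimal}, for $c=\bar p_{\text{opt}}$ problem \eqref{eq_SD} admits an optimal stationary randomized policy, and its proof shows that each $Z_i$ may be chosen separately, for each observed value $Y_i=y_i$, by solving the per-sample problem \eqref{eq_opt_stopping2}. Writing $\beta=\bar p_{\text{opt}}$ and using $q$ from \eqref{eq_sum}, I would therefore study the scalar objective
\[
g(y_i,z)\triangleq \mathbb{E}\!\left[q(y_i,z,Y_{i+1})-\beta\,(z+Y_{i+1})\,\middle|\,Y_i=y_i\right],\qquad z\in\mathbb{N},
\]
which is a genuine one-shot optimization because $Y_{i+1}$ is independent of $Y_i$ and of the decision $Z_i$.

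The key step is a first-difference (discrete convexity) argument. Since $q(y_i,z,Y_{i+1})=\sum_{n=y_i}^{y_i+z+Y_{i+1}-1}p(n)$, incrementing $z$ by one appends exactly the term $p(y_i+z+Y_{i+1})$ to the sum and changes the penalty by $-\beta$, so that
\[
g(y_i,z+1)-g(y_i,z)=\mathbb{E}\!\left[p(y_i+z+Y_{i+1})\,\middle|\,Y_i=y_i\right]-\beta.
\]
Because $p$ is non-decreasing, the right-hand side is non-decreasing in $z$, so $g(y_i,\cdot)$ is discretely convex. Its minimizer over $\mathbb{N}$ is then the smallest $z$ at which this first difference turns non-negative, which is precisely the threshold rule \eqref{lem4_eq1}. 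Since the per-sample minimizer is deterministic, no randomization over $Z_i$ can improve upon it; assembling this rule across $i$ and using that the $Y_i$ are i.i.d.\ (hence the rule is the same for every $i$) yields the stationary threshold policy claimed in \eqref{lem4_eq1}.

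I expect the main obstacle to be well-posedness of the threshold, i.e.\ showing that the minimizer in \eqref{lem4_eq1} is finite and attained, which amounts to $\lim_{z\to\infty}\mathbb{E}[p(y_i+z+Y_{i+1})\mid Y_i=y_i]\geq\beta$. I would obtain this from the fact that $\beta=\bar p_{\text{opt}}$ is a limiting time-average of values $p(\age_n)$, each bounded above by $\lim_{z\to\infty}p(z)$ by monotonicity of $p$, giving $\beta\leq\lim_{z\to\infty}p(z)$; finiteness of $\bar p_{\text{opt}}$ together with the regenerative structure then rules out the degenerate case in which the threshold would be forced to $+\infty$. A secondary point worth stating explicitly is the substitution $\mathbb{E}[Y_i+Z_i]=\mathbb{E}[Z_i+Y_{i+1}]$ used in \eqref{eq_opt_stopping1}, valid because the $Y_i$ are identically distributed; it is what lets the linear penalty combine cleanly with $q$ in the first-difference computation above.
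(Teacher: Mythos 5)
Your proposal is correct and takes essentially the same route as the paper: the reduction to the per-sample problem \eqref{eq_opt_stopping2} via Lemma \ref{lem_SRoptimal}, the first-difference computation (which is exactly the paper's \eqref{eq_opt_stopping5}), and the conclusion from monotonicity of $p$ that the differences change sign once at the threshold (the paper's \eqref{eq_opt_stopping6}) mirror the paper's proof, with your ``discrete convexity'' phrasing being an equivalent packaging of the same sign argument. Your added remarks on well-posedness of the threshold and on the substitution $\mathbb{E}[Y_i+Z_i]=\mathbb{E}[Z_i+Y_{i+1}]$ merely make explicit points the paper leaves implicit.
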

\begin{proof}
%See Appendix \ref{app3}.
Using \eqref{eq_sum} and $\beta = \bar p_{\text{opt}} $, \eqref{eq_opt_stopping2} can be expressed as
\begin{align}\label{eq_opt_stopping4}
\min_{\substack{\Pr[Z_i\in A| Y_i=y_i]}} \mathbb{E}\left[\sum_{n=0}^{Z_i+Y_{i+1}-1}\!\![p(n+Y_i)- \beta] \Bigg| Y_i\right].
\end{align}
It holds that  for $m =1, 2,3,\ldots$
\begin{align}\label{eq_opt_stopping5}
&\mathbb{E}\!\!\left[\sum_{n=0}^{m+Y_{i+1}}\!\!\!\![p(n+Y_i)- \beta] -\!\!\sum_{n=0}^{m+Y_{i+1}-1}\!\!\!\![p(n+Y_i)- \beta] \Bigg| Y_i\right]\nonumber\\
=& \mathbb{E}\!\left[p(Y_i+m+Y_{i+1})- \beta|Y_i\right].
\end{align}
Because $p: \mathbb{R}\mapsto \mathbb{R}$ is non-decreasing, if $Z_i$ is chosen according to  \eqref{lem4_eq1}, we can obtain
\begin{align}
&\mathbb{E}\!\left[p(Y_i+n+Y_{i+1})-\beta |Y_i\right] < 0,~ n = 0, \ldots,  Z_i-1, \\
&\mathbb{E}\!\left[p(Y_i+n+Y_{i+1})-\beta |Y_i\right] \geq 0,~ n \geq Z_i.\label{eq_opt_stopping6}
\end{align}
Based on \eqref{eq_opt_stopping5}-\eqref{eq_opt_stopping6}, it is easy to see that \eqref{lem4_eq1} is the optimal solution to \eqref{eq_opt_stopping4}. 
This completes the proof.
\end{proof}

%\subsection{Zero Duality Gap between \eqref{eq_SD} and \eqref{eq_dual}} \label{sec:dual}
%
%
%Similar to Theorem 7 in \cite{Sun_reportISIT17}, strong duality can be established in the following lemma:
%\begin{lemma}\label{thm_zero_gap}
%If \emph{$c= \bar p_{\text{opt}}$}, the following assertions are true:
%\begin{itemize}
%\vspace{0.5em}
%\item[(a).] The duality gap between \eqref{eq_SD} and \eqref{eq_dual} is zero, i.e., \emph{$d = h(\bar p_{\text{opt}})$}.
%\vspace{0.5em}
%\item[(b).] A common optimal solution to \eqref{eq_problem},  \eqref{eq_problem_S1}, and \eqref{eq_SD} is given by \eqref{thm1_eq4}, \eqref{thm1_eq5}, and \eqref{lem4_eq1}. The optimal value of \eqref{eq_problem} is given by \eqref{thm1_eq3}.
%\end{itemize}
%%Hence, the MMSE-optimal  policy in \eqref{eq_opt_solution} and \eqref{eq_thm1} is an optimal solution to \eqref{eq_SD}. 
%\end{lemma}
%
%\begin{proof}
%The key proof idea is to show that \eqref{eq_SD} has a \emph{geometric multiplier} \cite[Definition 6.1.1]{Bertsekas2003}.\footnote{Note that geometric multiplier is   different from the traditional Lagrangian multiplier.} According to \cite[Proposition 6.2.3]{Bertsekas2003}, if the set of geometric multipliers is non-empty, then the duality gap must be zero, no matter for convex or non-convex optimization problems. 
%The proof details are provided in Appendix \ref{app4}.
%\end{proof}

Hence, Theorem~\ref{thm1} follows from Lemma \ref{lem_ratio_to_minus} and Lemma \ref{lem4}.

%\begin{corollary}\label{thm_solution_form}
%%If $c = c_{\text{opt}}$, a
%There exists an optimal solution $(Z_1,Z_2,\ldots)$ %and $\lambda$ 
%to %\eqref{eq_Simple}, \eqref{eq_SD}, and equivalently 
%\eqref{eq_primal}, %-\eqref{eq_dual} 
%where $Z_i$  
%\begin{align}\label{eq_opt_stopping}
%\min_{\substack{p(y,A)}} \mathbb{E}\left[q(Y_i,\tau,Y_{i+1})- {\beta}(Y_i+\tau)\big|Y_i \right],
%\end{align}
%where $\beta$ is given by \emph{
%\begin{align}\label{eq_beta_new}
%\beta = \bar p_{\text{opt}} + \lambda \geq 0.
%\end{align}}
%\end{corollary}
\ignore{
\begin{proof} 
\subsection{An Upper Bound of \emph{$g(\lambda)$}}
%{Since Problem \eqref{eq_DPExpected} is feasible, $\overline{p}_{\text{opt}}$ is bounded.}
By restricting $\Pi_1$ in Problem \eqref{eq_problem_S1} to $\Pi_{\text{SR}}$, we obtain the following problem:
\begin{align}\label{eq_DPExpected_upper}
g_{\text{SR}} (\lambda) = \inf_{\pi\in\Pi_1}  L(\pi;\lambda),
\end{align}
where $g_{\text{SR}} (\lambda)$ is the optimum objective value of Problem \eqref{eq_DPExpected_upper}. Since $\Pi_{\text{SR}} \subset \Pi_1$, we can obtain
%the optimum objective value of Problem \eqref{eq_DPExpected_upper} is larger than that of Problem \eqref{eq_DPExpected}, i.e.,
\begin{align}\label{eq_time_average_exp0}
g_{\text{SR}} (\lambda)\geq g (\lambda).
\end{align}

For any stationary randomized policy in $\Pi_{\text{SR}}$, the joint distribution of $(Y_i,Z_i,Y_{i+1})$ is invariant for different $i$. Therefore, the distribution of $q(Y_i,Z_i,Y_{i+1})$ is invariant for different $i$.
For any stationary randomized policy $\pi=(Z_1,Z_2,\ldots)\in \Pi_{\text{SR}}$, we have
\begin{align}\label{eq_convergence1}
&\frac{1}{i}\mathbb{E}\left[\sum_{j=1}^{i}q(Y_i,Z_i,Y_{i+1})\right]=\mathbb{E}\left[q(Y_1,Z_1,Y_{2})\right],\\
&\frac{1}{i}\mathbb{E}\left[\sum_{j=1}^{i} (Y_j+Z_j)\right]=\mathbb{E}[Y_1+Z_1].\label{eq_convergence2}
\end{align}
Hence, Problem \eqref{eq_DPExpected_upper} can be reformulated as Problem \eqref{eq_SR}.

\subsection{The Upper Bound of \emph{$\overline{p}_{\text{opt}}$} is Tight, i.e., \emph{$\overline{p}_{\text{SR}}=\overline{p}_{\text{opt}}$}}
We will show $\overline{p}_{\text{SR}}=\overline{p}_{\text{opt}}$ in 4 steps, where we will need the following definitions:
For each policy $\pi=(Z_1,Z_2,\ldots)$ in $\Pi_1$ (more general than stationary randomized policies), we  define 
\begin{align}\label{eq_time_average_exp1}
&{a}_{n,\pi}\!\triangleq\!{\frac{1}{n}\mathbb{E}\bigg[\sum_{j=1}^{n}q(Y_i,Z_i,Y_{i+1})\bigg]}\!-\! \frac{\overline{p}_{\text{opt}}}{n}{\mathbb{E}\bigg[\sum_{j=1}^{n} (Y_{j}+Z_{j})\bigg]}, \\
&{b}_{n,\pi}\triangleq\frac{1}{n}\mathbb{E}\left[\sum_{j=1}^{n} (Y_j+Z_j)\right].\label{eq_time_average_exp2}
\end{align}

%Since $\overline{p}_{\text{opt}}$ is finite,

Further, define $\Gamma_{\text{SR}}$ as the set of limit points of sequences $\big(({a}_{n,\pi_{}},{b}_{n,\pi}),n=1,2,\ldots\big)$ associated with all \emph{stationary randomized} policies $\pi\in\Pi_{\text{SR}}$. %Using \eqref{eq_time_average_exp1} and \eqref{eq_time_average_exp2}, it is easy to show that
Because of \eqref{eq_convergence1} and \eqref{eq_convergence2},  for each stationary randomized policy $\pi\in\Pi_{\text{SR}}$, the sequence $({a}_{n,\pi},{b}_{n,\pi})$ has a unique limit point in the form of
\begin{align}\label{eq_form}
\left(\mathbb{E}[q(Y,Z,Y')]-\overline{p}_{\text{opt}}\mathbb{E}[Y+Z], \mathbb{E}[Y+Z]\right),
\end{align}
where $(Y,Z,Y')$ has the same joint distribution with $(Y_1,Z_1,Y_2)$.
%\begin{align}
%\left(\mathbb{E}[Q]-\overline{p}_{\text{opt}}\mathbb{E}[Y+Z], \mathbb{E}[Y+Z]\right).
%\end{align}
Hence, $\Gamma_{\text{SR}}$ is the set of all points $(\mathbb{E}[q(Y,Z,Y')]- $ $\overline{p}_{\text{opt}}\mathbb{E}[Y+Z], \mathbb{E}[Y+Z])$, where each point is associated with a conditional probability measure $p(y,A)=\Pr[Z\in A|Y=y]$, $Y$ and $Y'$ are \emph{i.i.d.} random variables with the same distribution   as $Y_i$. 

\emph{Step 1:} {\emph{We will show that there exists an optimal policy \emph{$\pi_{\text{opt}}\in\Pi$} of Problem \eqref{eq_problem_S1} such that the sequence \emph{$({a}_{n,\pi_\text{opt}},{b}_{n,\pi_\text{opt}})$} associated with policy \emph{$\pi_{\text{opt}}$} has at least one limit point in \emph{$\Gamma_{\text{SR}}$}}.
%, and any limit point of the sequence $({a}_{n,\pi_\text{opt}},{b}_{n,\pi_\text{opt}})$ is within $\Gamma_{\text{SR}}$}.

In \eqref{eq_problem_S1}, the observation $Y_{i}$ is independent from the history state and control $Y_1,\ldots, Y_{i-1}$, $Z_1,\ldots, Z_{i-1}$. Therefore, $Y_i$ is the \emph{sufficient statistic} \cite[p. 252]{Bertsekas2005bookDPVol1} for solving Problem \eqref{eq_problem_S1}. This tells us that there exists an optimal policy $\pi_{\text{opt}}=(Z_0,Z_1,\ldots)\in\Pi$ of Problem \eqref{eq_problem_S1} in which the control action $Z_i$ is determined based on only $Y_i$, but not the history state and control $Y_0,\ldots, Y_{i-1}$, $Z_0,\ldots, Z_{i-1}$ \cite{Bertsekas2005bookDPVol1}. We will show that the sequence $({a}_{n,\pi_\text{opt}},{b}_{n,\pi_\text{opt}})$ associated with this policy $\pi_{\text{opt}}$ has at least one limit point in $\Gamma_{\text{SR}}$.}

%Consider any sequence $({a}_n,{b}_n)$ associated with a causal policy $\pi=(Z_0,Z_1,\ldots)$.
It is known that $Z_{i}$ takes values in the standard Borel space $(\mathbb{R},\mathcal{{R}})$, where $\mathcal{{R}}$ is the Borel $\sigma$-field. According to \cite[Thoerem 5.1.9]{Durrettbook10}, for each ${i}$ there exists a conditional probability measure $p'_i(y,A)$ such that $p'_i(y,A)=\Pr(Z_{i}\in A|Y_i=y)$ for almost all $y$. That is, the control action $Z_{i}$ is determined based on $Y_i$ and the  conditional probability measure $p'_i(y,A)=\Pr(Z_{i}\in A|Y_i=y)$.
%Since $(Y_0,Y_1,\ldots)$ is a Markov chain, we have $\Pr(Z_{i}\in A|Y_0=y_0,Y_1=y_1,\ldots,Y_i=y_i) = \Pr(Z_{i}\in A|Y_i=y_i)$.
%(The existence of conditional probability was implicitly used in earlier sections.)
One can use this conditional probability $p'_i(y,A)$ to generate a stationary randomized policy $\pi'_{i,\text{SR}}\in \Pi_{\text{SR}}$. Then, the one-stage expectation $(\mathbb{E}[q(Y_i,Z_i,Y_{i+1})] - \overline{g}_{\text{opt}}\mathbb{E}[Y_{i}+Z_{i}], \mathbb{E}[Y_{i}+Z_{i}])$ is exactly the limit point generated by the stationary randomized policy $\pi'_{i,\text{SR}}$. Thus, $(\mathbb{E}[q(Y_i,Z_i,Y_{i+1})] - \overline{g}_{\text{opt}}\mathbb{E}[Y_{i}+Z_{i}], \mathbb{E}[Y_{i}+Z_{i}])\in \Gamma_{\text{SR}}$ for all $i=0,1,2,\ldots$
Using \eqref{eq_time_average_exp1}, \eqref{eq_time_average_exp2}, and the fact that $\Gamma_{\text{SR}}$ is convex, we can obtain $({a}_{n,\pi_\text{opt}},{b}_{n,\pi_\text{opt}})\in \!\Gamma_{\text{SR}}$ for all $n=1,2,3\ldots$ In other words, the sequence $({a}_{n,\pi_\text{opt}},{b}_{n,\pi_\text{opt}})$ is within $\Gamma_{\text{SR}}$.
Since $\Gamma_{\text{SR}}$ is a compact set, the sequence $({a}_{n,\pi_\text{opt}},{b}_{n,\pi_\text{opt}})$ must have a convergent subsequence, whose limit is in $\Gamma_{\text{SR}}$.

\emph{Step 1: We will show that $\Gamma_{\text{SR}}$ is a convex and compact set}.

It is easy to show that $\Gamma_{\text{SR}}$ is convex by considering a stationary randomized policy associated with $p(y,A)$ that is a mixture of two stationary randomized policies associated with $p_1(y,A)$ and $p_2(y,A)$, i.e., there exists $0<\lambda<1$ such that  $p(y,A)=\lambda p_1(y,A) + (1-\lambda) p_2(y,A)$.

For compactness, let $((d_j,e_j),j=1,2,\cdots)$ be any sequence of points in $\Gamma_{\text{SR}}$, we need to show that there is a convergent subsequence $(d_{j_k},e_{j_k})$ whose limit is also in $\Gamma_{\text{SR}}$.
Since $(d_j,e_j)\in\Gamma_{\text{SR}}$, there must exist $(Y,Z_{(j)},Y')$ with conditional probability $p_{j}(y,A)=\Pr[Z_{(j)}\in A|Y=y]$, such that $d_j=\mathbb{E}[q(Y,Z_{(j)},Y')]-\overline{p}_{\text{opt}}\mathbb{E}[Y+Z_{(j)}]$, $e_j=\mathbb{E}[Y+Z_{(j)}]$.
Let $\mu_j$ be the joint probability measure of $(Y,Z_{(j)},Y')$, then $(d_j,e_j)$ is uniquely determined by $\mu_j$.
For any $L$, we can obtain
\begin{align}
&\mu_j(Y\leq L, Z_{(j)}\leq L,Y'\leq L)\nonumber\\
\geq& \mu_j(Y+Z_{(j)}+Y'\leq L)\nonumber\\
\geq& 1- \frac{\mathbb{E}[Y+Z_{(j)}+Y']}{L},~\forall~j,\nonumber
\end{align}
where the last inequality is due to Markov's inequality. By \eqref{eq_finite}, we have $\mathbb{E}[Z_{(j)}]\leq M<\infty$ for all $j$. Using this and $\mathbb{E}[Y]<\infty$, we can obtain that
for any $\epsilon$, there is an $L$ such that $$\liminf_{j\rightarrow\infty} \mu_j(|Y|\leq L, |Z_{(j)}|\leq L,|Y'|\leq L)\geq 1-\epsilon.$$ Hence, the sequence of measures $\mu_j$ is tight. By Helly's selection theorem \cite[Theorem 3.9.2]{Durrettbook10}, there is a subsequence of measures $\mu_{j_k}$ that converges weakly to a limit measure $\mu_\infty$.

Let $(Y,Z_{(\infty)},Y')$ and $p_\infty(y,A)=\Pr[Z_\infty\in A|Y=y]$ denote the random vector and conditional probability corresponding to the limit measure $\mu_\infty$, respectively. We can define $d_\infty=\mathbb{E}[q(Y,Z_{(\infty)},Y')]-\overline{p}_{\text{opt}}\mathbb{E}[Y+Z_{(\infty)}]$, $e_\infty=\mathbb{E}[Y+Z_{(\infty)}]$.
%, where $\mathbb{E}[Q_{(\infty)}] = \frac{1}{2}\mathbb{E}[(Y+Z_{(\infty)})^2]+\mathbb{E}[Y+Z_{(\infty)}]\mathbb{E}[Y_{}]$.
Since the function $q(y,z,y')$ is in the form of an integral, it is continuous and thus measurable. Using the continuous mapping theorem \cite[Theorem 3.2.4]{Durrettbook10}, we can obtain that $q(Y,Z_{(j_k)},Y')$ converges weakly to $q(Y,Z_{(\infty)},Y')$. Then, using the condition \eqref{eq_bound}, together with the dominated convergence theorem (Theorem 1.6.7 of \cite{Durrettbook10}) and Theorem 3.2.2 of \cite{Durrettbook10}, we can obtain
%\begin{align}
$\lim_{k\rightarrow\infty}(d_{j_k},e_{j_k})=(d_\infty,e_\infty)$. %\nonumber
%\end{align}
Hence, $((d_j,e_j),j=1,2,\cdots)$ has a convergent subsequence.
Further, we can generate a stationary randomized policy $\pi_{\infty,\text{SR}}$ by using the conditional probability $p_\infty(y,A)$ corresponding to $\mu_\infty$. Then, $(d_\infty,e_\infty)$ is the limit point generated by the stationary randomized policy $\pi_{\infty,\text{SR}}$, which implies
$(d_\infty,e_\infty)\in\Gamma_{\text{SR}}$. In summary, any sequence $(d_j,e_j)$ in $\Gamma_{\text{SR}}$ has a convergent subsequence $(d_{j_k},e_{j_k})$ whose limit $(d_\infty,e_\infty)$ is also in $\Gamma_{\text{SR}}$. Therefore, $\Gamma_{\text{SR}}$ is a compact set.

\end{proof}}

\ignore{
\begin{theorem}(Optimality of Stationary Deterministic Policies) \label{lem_SD}
If $p: \mathbb{R}\mapsto \mathbb{R}$ is non-decreasing and the service times $Y_i$ are {i.i.d.}, then there exists a stationary deterministic policy that is optimal for solving \eqref{eq_opt_stopping3}.
\end{theorem}
\begin{proof}

\end{proof}}

\fi

\section{Numerical Results}\label{sec:numerical}

%\begin{figure}%[!t]
%\centering
%\includegraphics[width=0.48\textwidth]{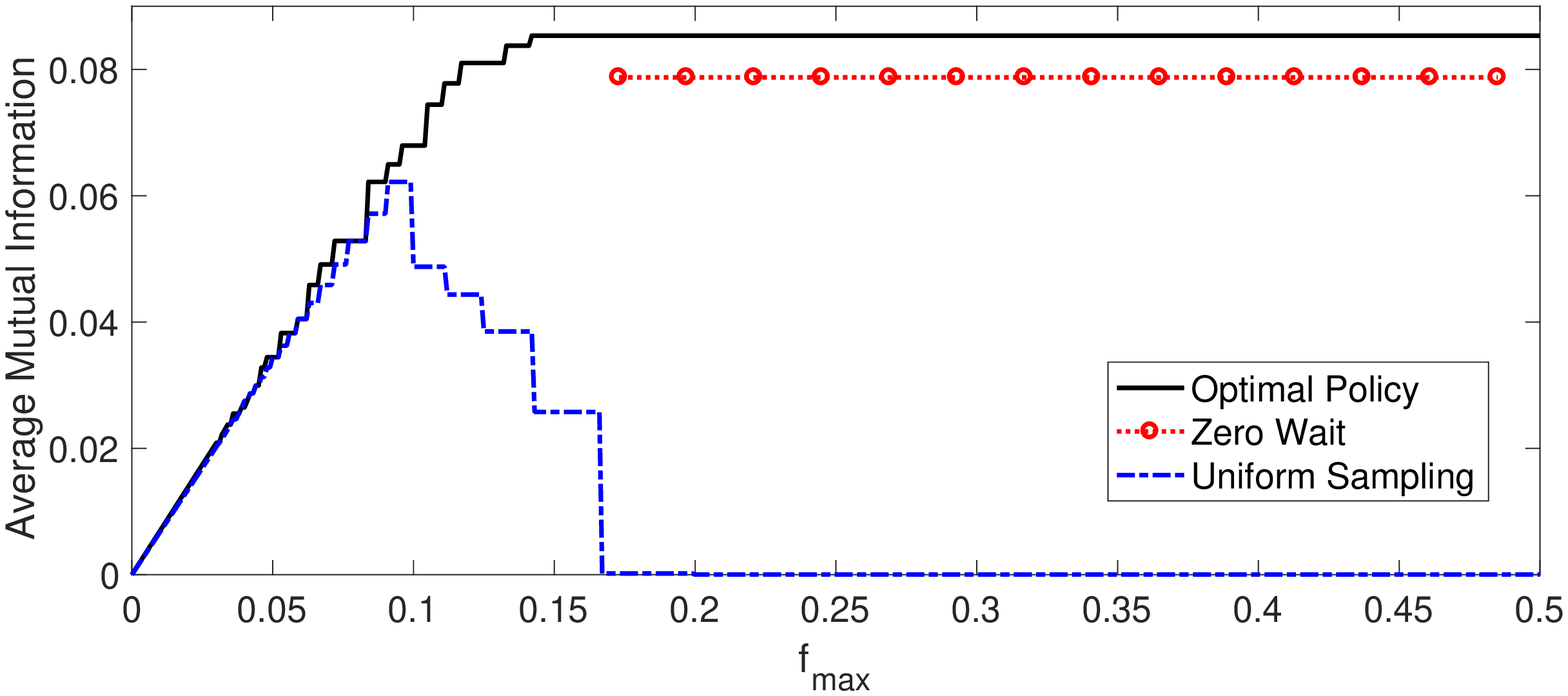}   
%\caption{Average mutual information vs $f_{max}$ for \textit{i.i.d.} discrete transmission times}\vspace{-0.0cm}
%\label{fig_T1}
%\end{figure}    

In this section, we evaluate the freshness of information achieved in the following three sampling policies: 
%(or the special case with $p(\Delta_n = -I(X_n;W_n)=-r(\Delta_n))$ in the Theorem 2). then compare it with the constant wait policies and minimum wait policies in Theorem 1, with the constant wait policies and zero wait policies in Theorem 2.
\begin{itemize}
\item \textit{Uniform sampling}: Periodic sampling with a period given by $S_{i+1} - S_i =  \mathbb{E}[Y_i]$.
\item \textit{Zero-wait}: In this sampling policy, a new sample is taken once the previous sample is delivered to the receiver, so that $S_{i+1} = D_i = S_i + Y_i$. %This is feasible when $f_{max} > 1/\mathbb{E}[Y_i].$
\item \textit{Optimal policy}: The sampling  policy given by Theorem~\ref{thm2}. 
\end{itemize}
Let $I_{\text{uniform}}$, $I_{\text{zero-wait}}$, and $I_{\text{opt}}$ be the average mutual information of these three sampling policies.

We consider the binary Markov source $X_n$ in \eqref{eq_binary}. The service time $Y_i$ is equal to either $1$ or $11$ with equal probability.\footnote{The service time distribution is different from that used in Figure \ref{fig_policy}.} 
%are modeled as \emph{i.i.d.} discrete random variable with a probability mass function $\Pr[Y_i=1] = \Pr[Y_i=11]=0.5$. 
%Figure \ref{fig_T1} shows the tradeoff between mutual information and $f_{max}$ with \textit{i.i.d.} transmission times with mean $\mathbb{E}[Y_i] = 6$. The mutual information is calculated with a Bernoulli mean $q = 0.1$. The graph is a stepped function because all sampling times must be a natural number. The optimal policy produces results similar to the uniform sampling policy for small values of $f_{max}$, as the constraint on the optimal policy is active. As $f_{max}$ increases, $I_{\text{uniform}}$ begins to fall. This is because of the queue growth when the sampling period $T$ is close to $\mathbb{E}[Y_i]$. When $f_{max}$ reaches the maximum throughput $1/6 = 0.1\bar{6}$, $I_{\text{uniform}}$ goes to zero because the queue grows infinitely. When $f_{max} > 0.1\bar{6}$, zero-wait is feasible, but the optimal policy is still better, as the constraint becomes inactive and the optimal policy goes to $\bar{I}_{opt}$. 
Figure \ref{fig_T2} depicts the time-average expected mutual information  versus the mean $q$ of the Bernoulli random variables $V_n$ in  \eqref{eq_binary}. One can observe  that $I_{\text{opt}} \ge I_{\text{zero-wait}} \ge I_{\text{uniform}}$ holds for every value of $q$. Notice that because of the queueing delay in the uniform sampling policy, $I_{\text{uniform}}$  is much smaller than $I_{\text{opt}}$ and  $I_{\text{zero-wait}}$. 
In addition, as $q$ grows from 0 to 0.5, the changing speed of the binary Markov source $X_n$ increases and the freshness of information (i.e., the time-average expected mutual information) decreases. When $q = 0.5$, the $X_n$'s form an \emph{i.i.d.} sequence and the freshness of information is zero in all three sampling policies.

\section{Conclusion}
In this paper, we have used mutual information to evaluate the freshness of the received samples that describe the status of a remote source. We have developed an optimal sampling policy that can maximize the time-average expectation of the above mutual information. This optimal sampling policy has been shown to have a nice structure. In addition, we have generalized \cite{AgeOfInfo2016} by finding the optimal sampling strategies for minimizing the time-average expectation of \emph{arbitrary non-decreasing} age penalty functions. % We will consider other related information freshness metrics in our future work. 
%A lemma developed in  \cite{Sun_reportISIT17} has been used to establish these results.

\bibliographystyle{IEEEtran}
\bibliography{ref}
\appendices
% !TEX root = ./heterogeneous_servers.tex
\appendices
%\appendix

\ignore{
\section{Proof of Lemma \ref{lem_SRoptimal}}\label{app2}
In \eqref{eq_primal}, the minimization of the term 
\begin{align}\label{eq_opt_stopping1}
\mathbb{E}\left[q(Y_j,Z_j,Y_{j+1})- (\bar p_{\text{opt}} + \lambda)(Y_j+Z_j)\right]
\end{align}
depends on $(Y_1,\ldots,Y_j, Z_1,\ldots,Z_{j-1})$ via $Y_j$. Hence, $Y_j$ is a sufficient statistic for determining $Z_j$ in \eqref{eq_primal}. This means that the rule for determining $Z_i$ can be represented by the conditional probability distribution $\Pr[Z_i\in A| Y_i=y_i]$. Also, there exists an optimal  solution $(Z_1,Z_2,\ldots)$ to \eqref{eq_primal}, in which $Z_i$ is determined by solving 
\begin{align}\label{eq_opt_stopping2}
\min_{\substack{\Pr[Z_i\in A| Y_i=y_i]}} \mathbb{E}\left[q(Y_i,Z_i,Y_{i+1})\!-\! (\bar p_{\text{opt}} + \lambda)(Y_i+Z_i)\big|Y_i \right],
\end{align}
and then use  the observation $Y_i=y_i$ and the optimal $\Pr[Z_i\in A| Y_i=y_i]$ solving \eqref{eq_opt_stopping2}  to decide $Z_i$. Finally, notice that the minimizer of \eqref{eq_opt_stopping2} depends on the joint distribution of $Y_i$ and $Y_{i+1}$. Because the $Y_i$'s are  \emph{i.i.d.},  the joint distribution of $Y_i$ and $Y_{i+1}$ is invariant for $i=1,2,\ldots$ Hence, the optimal conditional probability measure $\Pr[Z_i\in A| Y_i=y_i]$ solving \eqref{eq_opt_stopping2} is invariant for $i=1,2,\ldots$ By definition,  there exists a stationary randomized policy that is optimal for solving Problem \eqref{eq_primal}, which completes the proof.

\section{Proof of Lemma \ref{lem4}}\label{app3}
Using \eqref{eq_sum} and $\beta = \bar p_{\text{opt}} + \lambda$, \eqref{eq_opt_stopping2} can be expressed as
\begin{align}\label{eq_opt_stopping4}
\min_{\substack{\Pr[Z_i\in A| Y_i=y_i]}} \mathbb{E}\left[\sum_{n=Y_{i}}^{Y_i+Z_i+Y_{i+1}-1}\!\![p(n)- \beta] \Bigg| Y_i\right].
\end{align}
Because $p: \mathbb{R}\mapsto \mathbb{R}$ is non-decreasing and \eqref{lem4_eq1}, we have
\begin{align}
&\mathbb{E}\!\left[p(n)-\beta |Y_i\right] \leq 0,~ n = Y_i, \ldots, Y_i + Z_i + Y_{i+1}-1, \\
&\mathbb{E}\!\left[p(n)-\beta |Y_i\right] \geq 0,~ n \geq Y_i + Z_i + Y_{i+1}.
\end{align}
Therefore, the optimal solution to \eqref{eq_opt_stopping4} is given by \eqref{lem4_eq1}. This completes the proof.}

\end{document}